\crefname{equation}{}{}
\newtheorem{lemma}{Lemma}[section]
\newtheorem{proposition}{Proposition}[section]
\theoremstyle{definition}
\newtheorem{definition}{Definition}[section]
\begin{document}
%
\title{Joint Millimeter-Wave Fronthaul and OFDMA Resource Allocation in Ultra-Dense CRAN}
%
%
%

\author{Reuben~George~Stephen~\IEEEmembership{Student~Member,~IEEE,} and Rui~Zhang~\IEEEmembership{Fellow,~IEEE}
\thanks{\copyright 2017 IEEE. Personal use of this material is permitted, but republication/redistribution requires IEEE permission. See http://www.ieee.org/publications\_standards/publications/rights/index.html for more information.}
\thanks{This work was supported in part by the National University of Singapore under Research Grant R-263-000-B46-112.}
\thanks{R. G. Stephen is with the NUS Graduate School for Integrative Sciences and Engineering, and also with the Department of Electrical and Computer Engineering, National University of Singapore (NUS), Singapore 117456 (e-mail: reubenstephen@u.nus.edu).}
\thanks{R. Zhang is with the Department of Electrical and Computer Engineering, NUS, Singapore 117583, and also with the Institute for Infocomm Research, Agency for Science, Technology  and Research, Singapore 138632 (e-mail: elezhang@nus.edu.sg).}
}
%
%

\markboth{IEEE Transactions on Communications}%
{Accepted paper}
%


\maketitle
\begin{abstract}
Ultra-dense~(UD) wireless networks 
and cloud radio access networks~(CRAN) 
are two promising network architectures for the emerging fifth-generation~(5G) wireless communication systems. 
By jointly employing them, a new appealing network solution is proposed in this paper, termed UD-CRAN. In a UD-CRAN, millimeter-wave~(mmWave) wireless fronthaul is preferred for information exchange between the central processor and the distributed remote radio heads~(RRHs), due to its lower cost and higher flexibility in deployment, compared to fixed optical links. 
This motivates our study in this paper on the downlink transmission in a mmWave fronthaul enabled, orthogonal frequency division multiple access~(OFDMA) based UD-CRAN. In particular, the fronthaul is shared among the RRHs via time division multiple access~(TDMA); while the RRHs jointly transmit to the users on orthogonal frequency sub-channels using OFDMA. The joint resource allocation over the TDMA-based mmWave fronthaul and OFDMA-based wireless transmission is investigated to maximize the weighted sum rate of all users. Although the problem is non-convex, we propose a Lagrange duality based solution, which can be efficiently computed with good accuracy. To further reduce the complexity, we also propose a greedy search based heuristic, which achieves close to optimal performance under practical setups. Finally, we show the significant throughput gains of the proposed joint resource allocation approach compared to other benchmark schemes by simulations. 

\end{abstract}

\begin{IEEEkeywords}
Cloud radio access network, orthogonal frequency division multiple access, resource allocation, 
ultra-dense network, millimeter-wave 
fronthaul. 
\end{IEEEkeywords}
%
\section{Introduction}
\IEEEpubidadjcol
\IEEEPARstart{T}{he explosion} of wireless data traffic in recent years has led to a demand for a 1000-fold increase in the capacity of the future fifth-generation (5G) wireless communication networks~\cite{andrews-etal2014what}. To achieve this end, increasing the number of cellular base stations~(BSs) deployed to serve a given area, also known as network densification, 
is foreseen to be 
necessary~\cite{andrews-etal2014what,bhushan-etal2014network}. In such ultra dense~(UD) networks, the number of BSs deployed in a given area can be comparable to, or even exceed the number of users~\cite{andrews-etal2014what,bhushan-etal2014network}.\footnote{A UD network can also be defined solely in terms of the BS density, which can be up to $40-50~\text{BSs}/\text{km}^2$~\cite{ge-etal20165g}.} 
On the other hand, a cloud radio access network~(CRAN), where the conventional BSs are replaced by low-power and low-complexity 
remote radio heads~(RRHs) that are coordinated by a central processor~(CP), provides a new cost-effective way to achieve network densification~\cite{andrews-etal2014what}. 
In CRAN, joint signal processing for a cluster of RRHs and their served users can be performed at the CP, which leads to increased spectral and energy efficiency, via centralized resource allocation~\cite{park-etal2013joint,zhao-etal2013coordinated,zhou-yu2014optimized,zhuang-lau2014backhaul,dai-yu2014sparse,shi-etal2014group,luo-etal2015downlink,liu-zhang2015optimized,liu-zhang2015downlink,tao-etal2016content,jain-etal2016backhaul,shi-etal2016smoothed,fan-etal2016dynamic,liu-etal2015joint}. In order to enable CRAN, a cluster of RRHs need to communicate with their associated CP for information exchanges via high-speed links called the fronthaul. The RRHs can be either simple relay nodes without encoding/decoding capability, or can be similar to the BSs in conventional cellular networks with baseband processing capability~\cite{bi-etal2015wireless}. In this paper, we focus on the downlink transmission in a CRAN, where the CP forwards the user messages to the RRHs via the fronthaul,\footnote{When the RRHs have encoding/decoding capability, the links between the RRHs and CP are also referred to as ``backhaul", as in conventional cellular networks. However, we use ``fronthaul", following the terminology for CRAN.} while the RRHs decode, and then re-encode and cooperatively transmit the information to the users. 
\IEEEpubidadjcol

Combining the idea of network densification with centralized joint processing leads to a powerful new network architecture that can support wireless connectivity of ultra high throughput, termed UD-CRAN~\cite{shi-etal2015large}. Traditionally, the fronthaul links in CRAN are provisioned using optical fibers or high-speed Ethernet, with each RRH having a dedicated link to the CP. However, in UD-CRAN, where the RRHs are large in number and may be at locations that are difficult to reach by laying fibers or wires, providing such dedicated wired links between individual RRHs and the CP is not always feasible. Thus, to achieve practically scalable cost and complexity, a millimeter wave~(mmWave) wireless fronthaul is desirable for UD-CRAN, as it is cost-effective, flexible and easier to deploy compared to wired fronthauls~\cite{bojic-etal2013advanced,dehos-etal2014mmwave}. 
The availability of largely unused bandwidth in the mmWave frequencies, especially in the 70--80~GHz E-band, and commercial equipment based on highly directional antennas for transmission and reception in this band~\cite{bojic-etal2013advanced,dehos-etal2014mmwave,johnson2013mobile,benedetto-etal2013huawei,fujitsu} makes it possible to realize a mmWave fronthaul in practice. However, even at mmWave frequencies, the total bandwidth available for the fronthaul can be much less compared to that of commercial fiber links. For example, current generation fronthaul equipment operating in the mmWave E-band can support rates of a few gigabits per second~(Gbps) over bandwidths of around 250~MHz, while fiber-based fronthaul can typically support rates of several hundreds of Gbps~\cite{bojic-etal2013advanced,dehos-etal2014mmwave,johnson2013mobile}. This means that the capacity constraints imposed by the mmWave fronthaul are much more stringent than in the case of wired fronthauls. On the other hand, the mmWave fronthaul allows a more flexible allocation of resources such as bandwidth and time among the RRHs in their communications with the CP, compared to a wired fronthaul, where the capacity of each link is in general fixed. Since the service provider knows the locations of the RRHs and provides fronthaul infrastructure, and due to the highly directional antennas used in mmWave frequencies, line-of-sight~(LoS) communication is possible between the CP and each RRH. However, since the free-space path loss at mmWave frequencies is much more severe compared to traditional microwave communication at lower frequencies, directional transmission is necessary to compensate for the path loss over large distances, and also provides the additional advantage of reduced interference. Thus, although LoS communication can ensure an almost error-free communication between the CP and each RRH, the capacity of the mmWave LoS channel to each RRH can be different based on its distance from the CP, which calls for its judicious use. Motivated by the practical considerations listed above, we consider in this paper a time division multiple access~(TDMA)-based mmWave fronthaul, where the CP transmits to multiple RRHs over their LoS channels over orthogonal time slots. The average rate on the fronthaul to each RRH is thus determined by the time allotted for transmission by the CP to the RRH and the capacity of the mmWave LoS link. 

Most of the existing work on CRAN and coordinated transmission by multiple BSs considers narrow-band transmission shared by multiple users~\cite{park-etal2013joint,zhao-etal2013coordinated,zhou-yu2014optimized,zhuang-lau2014backhaul,dai-yu2014sparse,shi-etal2014group,luo-etal2015downlink,liu-zhang2015optimized,liu-zhang2015downlink,tao-etal2016content,jain-etal2016backhaul,shi-etal2016smoothed,fan-etal2016dynamic,mosleh-etal2016proportional}. However, the orthogonal frequency division multiple access~(OFDMA) based multiuser transmission is more appealing for the high-throughput demanding wireless networks such as 5G.  
For the case with wired backhauls, BS coordination in an OFDMA-based cellular network with backhaul constraints is studied in~\cite{chowdhery-etal2011cooperative,mehryar-etal2012dynamic}. In these papers, each BS could share user data with neighboring BSs on different sub-sets of sub-channels~(SCs), which are chosen heuristically, while no centralized processing is considered. With centralized processing, a joint power and fronthaul rate allocation problem is studied for the uplink transmission in an OFDMA-based CRAN in~\cite{liu-etal2015joint}, where each RRH performs scalar quantization on each SC and forwards the quantized data to the CP for joint decoding. Unlike all the above mentioned prior work that assume dedicated wired fronthaul/backhaul links that connect the RRHs to each other or to the CP, we consider in this paper a new setup of UD-CRAN with a mmWave fronthaul that is shared among the RRHs for communicating with the CP. We focus on the downlink transmission in UD-CRAN and study the optimization for joint resource allocation over the TDMA-based mmWave fronthaul and the OFDMA-based wireless access to maximize the users' weighted sum-throughput from the CP. 
\begin{figure}[h]
\centering
\includegraphics[width=\linewidth]{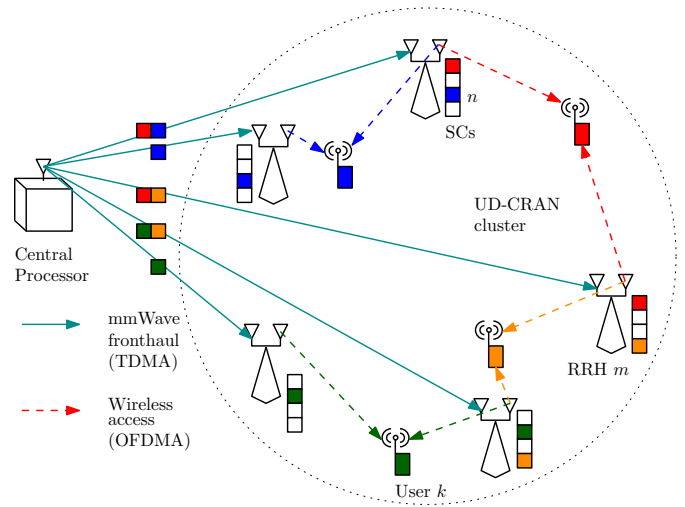}
\caption{Schematic of OFDMA-based UD-CRAN with mmWave fronthaul.}\label{F:SysModel}
\end{figure}

In a broad sense, CRAN can also be viewed as a cooperative relay network~\cite{nosratinia-etal2004cooperative} where multiple relays~(RRHs) cooperatively forward the signal from one source to one or more destinations in the downlink communication. In conventional resource allocation problems considered earlier for such networks~\cite{ng-yu2007joint,kadloor-adve2010relay}, the destination receives multiple copies of the signal both from the source and the relays to maximize the received signal-to-noise ratio~(SNR). In contrast, in this paper, we consider a joint TDMA/OFDMA resource allocation problem over both the mmWave fronthaul~(the first hop) and the wireless access links~(the second hop), which is a new and more general design problem not yet considered in the literature.

In particular, we consider a single cluster of $M$ RRHs in an OFDMA-based UD-CRAN with $N$ orthogonal SCs and $K$ users, where all the RRHs and user terminals are each equipped with a single antenna as shown in~\cref{F:SysModel}. The CP first sends the users' message bits to the RRHs via a mmWave wireless fronthaul shared among them using TDMA. On each SC, due to the limited wireless fronthaul capacity, in general only a sub-set of the RRHs are scheduled to receive the message for one particular user from the CP, which then encode the message using OFDMA and cooperatively transmit it to the assigned user. 
The joint transmission by the selected RRHs on each SC leads to a higher coherent-combining~(transmit beamforming) gain, and hence helps increase the transmission rate to the user assigned to the SC. However, this rate improvement must be supported by the fronthaul of all the RRHs participating in the joint transmission. In addition, the achievable rate on each SC depends on the transmit power levels allocated by each of the transmitting RRHs, as well as their wireless channels to the user assigned to the SC. This thus calls for a new joint resource allocation and transmission scheduling in both the TDMA-based mmWave fronthaul and OFDMA-based wireless transmission. The main results of this paper are summarized as follows.  
\begin{itemize}
\item We study a joint resource allocation problem in UD-CRAN, including the fronthaul TDMA time allocation for different RRHs, the selected SCs and their transmit power allocation for OFDMA transmission at different RRHs, as well as the selected orthogonal SCs for each of the users to maximize their weighted sum rate~(WSR) in the downlink transmission. To the best of our knowledge, this work is the first that considers joint mmWave fronthaul and wireless access transmission optimization in a hybrid TDMA/OFDMA network setup. This problem, however, is combinatorial and non-convex, and incurs an exponential complexity of $O\left(\left(K2^M\right)^N\right)$ if a simple exhaustive search of the optimal solution is conducted. Evidently, this complexity is not practically affordable in a system with large values of $M$ and/or $N$. 
\item We thus propose a Lagrange duality based algorithm, which can achieve the optimal solution asymptotically when the number of SCs is large, with a reduced complexity of $O\left(NK2^M\right)$. 
\item We show that on each SC, the received SNR at the user under the optimal power allocation by the RRHs is a submodular set function of the set of selected RRHs on that SC. Motivated by this result, we propose a greedy algorithm based suboptimal solution, with a reduced complexity of $O\left(NKM^2\right)$, which is shown to be able to achieve close-to-optimal throughput performance under various practical setups by simulations. 
\item Finally, we compare the proposed solutions to other benchmark schemes by simulations, which show that they can achieve significant throughput gains, thanks to the new joint mmWave fronthaul and OFDMA transmission optimization.
\end{itemize} 

The rest of this paper is organized as follows. \cref{Sec:SysMod} describes the system model of UD-CRAN with OFDMA-based wireless access and TDMA-based mmWave fronthaul. The joint resource allocation problem for WSR maximization is presented in~\cref{Sec:ProbForm} and the proposed solutions are given in~\cref{Sec:PropSol}. \cref{Sec:SimResults} presents simulation results comparing the proposed solutions with other benchmark schemes in terms of achievable sum rate. Finally, \cref{Sec:Conc} concludes the paper.

{\it Notation}: In this paper, scalars are denoted by lower-case letters, e.g., $x$, while vectors are denoted by bold-face lower-case letters, e.g., $\bm x$. The set of real numbers, non-negative real numbers and complex numbers are denoted by $\mathds{R}$, $\mathds{R}_+$ and $\mathds{C}$, respectively. Similarly, $\mathds{R}^{x\times 1}$, $\mathds{R}_+^{x\times 1}$ and $\mathds{C}^{x\times 1}$ denote the corresponding spaces of $x$-dimensional column vectors. For $x\in\mathds{R}$, $[x]^+\triangleq\max\{x,0\}$. Also, $\lceil x\rceil$  denotes the smallest integer greater than or equal to $x$, and $\lfloor x\rfloor$ denotes the largest integer less than or equal to $x$. 
For $x\in\mathds{C}$, $\left|x\right|\geq 0$ denotes the magnitude of $x$ and $\angle x\in[0,2\pi)$ denotes its phase angle. For a vector $\bm x$, $\bm x^\mathsf{T}$ denotes its transpose, and $\left\|\bm x\right\|$ denotes its Euclidean norm. Vectors with all elements equal to $1$ and $0$ are denoted by $\bm 1$ and $\bm 0$, respectively. For $\bm x,\bm y\in\mathds{R}^{M\times 1}$, $\bm x\succeq\bm y$ denotes component-wise inequalities, i.e., $x_i\geq y_i\enspace\forall i=1,\dotsc,M$. In addition,  $\mathsf{diag}(\begin{matrix}x_1&\cdots&x_M\end{matrix})$ denotes an $M\times M$ diagonal matrix with diagonal elements given by $x_1,\dotsc,x_M$. For a finite set $\mathcal{S}$, $\left|\mathcal{S}\right|$ denotes its cardinality and $2^\mathcal{A}$ denotes the set of all subsets of $\mathcal{A}$. Finally, $\mathcal{CN}\left(0,\sigma^2\right)$ denotes a circularly symmetric complex Gaussian~(CSCG) random variable with mean $0$ and variance $\sigma^2$, and the symbol $\sim$ is used to mean  ``distributed as". 
\section{System Model}\label{Sec:SysMod}
We study the downlink transmission in a single UD-CRAN cluster consisiting of $M$ single-antenna RRHs, and $K$ single-antenna users, as shown in~\cref{F:SysModel}. Let $\mathcal{M}=\{1,\dotsc,M\}$ denote the set of RRHs, and $\mathcal{K}=\{1,\dotsc,K\}$ denote the set of users. We consider that the RRHs receive the users' data from the CP via mmWave communications by sharing a given spectrum of bandwidth $W$~Hz centered at a frequency of $73$~GHz, using TDMA. The links between the CP and each RRH are LoS, with a free-space path loss given by $69.7 + 24\log_{10}\left(D_m\right)$~dB~\cite{rappaport-etal2015wideband,maccartney-etal2016millimeter}, where $D_m$ in meters~(m) is the distance between the CP and the RRH $m$. We further consider that all the RRHs encode their received data using OFDMA and then cooperatively transmit to the users in the downlink. The wireless access transmission to the users takes place over a multipath channel of bandwidth $B$~MHz centered at a frequency of $2$~GHz, which is equally divided into $N$ orthogonal frequency SCs following the Third Generation Partnership Project~(3GPP) Long Term Evolution-Advanced~(LTE-A) standard~\cite{3gpp36211}. As the mmWave fronthaul and wireless access transmissions are over different frequency bands, the transmission between the CP and the RRHs, and that between the RRHs and the users, can take place simultaneously without interfering with each other. 

Let $\mathcal{N}=\{1,\dotsc,N\}$ denote the set of orthogonal SCs, and let $\nu_{k,n}$ indicate whether user $k$ is assigned to SC $n$, i.e., 
\begin{align}
\nu_{k,n}=\begin{cases}1&\text{if user }k\text{ is assigned to SC }n\nonumber\\
0&\text{otherwise.}\end{cases}
\end{align} 
Also define $\bm\nu_n\triangleq\begin{bmatrix}\nu_{1,n}&\cdots&\nu_{K,n}\end{bmatrix}^\mathsf{T}\in\{0,1\}^{K\times 1}$ as the user association vector at SC $n$. According to OFDMA, each SC $n\in\mathcal{N}$ is assigned to at most one user in downlink transmission, and thus, $\bm 1^\mathsf{T}\bm\nu_n\leq 1,\enspace\forall n\in\mathcal{N}$.
Then, the set of SCs assigned to user $k$, denoted by $\mathcal{N}_k\subseteq\mathcal{N}$, is given by $\mathcal{N}_k=\left\{n\middle|\nu_{k,n}=1\right\}$, where $\mathcal{N}_j\cap\mathcal{N}_k=\emptyset,~\forall j\neq k,~j,k\in\mathcal{K}$. 

Since the mmWave fronthaul capacity for each RRH is practically limited, in general it can only receive the data for a selected sub-set of the users from the CP, and then forward them to the selected users in the OFDMA-based downlink transmission. As a result, each RRH $m$ transmits only on the corresponding sub-set of SCs that are assigned to the users whose data are received from the CP. Let 
\begin{align}
\alpha_{m,n}=\begin{cases}1&\text{if RRH }m\text{ transmits data on SC }n\\
0&\text{otherwise.}\end{cases}
\end{align} 
Define $\bm\alpha_n\triangleq\begin{bmatrix}\alpha_{1,n}&\cdots&\alpha_{M,n}\end{bmatrix}^\mathsf{T}\in\{0,1\}^{M\times 1}$ as the RRH selection vector at each SC $n\in\mathcal{N}$. Then, the sub-set of RRHs that transmit on SC $n$ is given by
\begin{align}
\mathcal{A}_n=\left\{m\in\mathcal{M}\middle|\alpha_{m,n}=1\right\},\quad n\in\mathcal{N}.\label{E:SetSelRRH}
\end{align}  
Thus, the RRHs in $\mathcal{A}_n$ cooperatively send the data to the user $k$ assigned to SC $n$, i.e., $\nu_{k,n}=1$. In the following two sub-sections, we present the models for the wireless access and mmWave fronthaul transmissions in detail, respectively. An illustration for them is also given in~\cref{F:SysModel}.
\subsection{Wireless Access via OFDMA}
Let $h_{k,m,n}=\left|h_{k,m,n}\right|e^{\angle h_{k,m,n}}$ denote the complex wireless access channel coefficient to the user $k\in\mathcal{K}$, from RRH $m\in\mathcal{M}$, on SC $n\in\mathcal{N}$. 
We assume that the magnitudes of all the channel coefficients $\left|h_{k,m,n}\right|$'s are known at the CP, e.g., using appropriate training methods~\cite{zhang-etal2016locally}. To achieve coherent signal combining from all RRHs in $\mathcal{A}_n$ at the receiver of the user $k$ assigned to SC $n$, the signal transmitted by RRH $m\in\mathcal{M}$ on SC $n\in\mathcal{N}_k$, can be written as 
\begin{align}
x_{m,n}=\alpha_{m,n}\sqrt{p_{m,n}}e^{-\angle h_{k,m,n}}s_{k,n},\quad m\in\mathcal{M},\enspace n\in\mathcal{N}_k\label{E:TxSym}
\end{align}
where $s_{k,n}\sim\mathcal{CN}(0,1)$ is the information-bearing signal that is assumed to be Gaussian, for user $k$ on SC $n\in\mathcal{N}_k$, and $p_{m,n}\geq 0$ denotes the power allocated by RRH $m$ on SC $n$. If $\alpha_{m,n}=0$, then RRH $m$ does not transmit signals to any user on SC $n$, and $x_{m,n}=0$. In this case, $p_{m,n}$ should also be equal to zero, without loss of generality. From~\eqref{E:TxSym}, it can be seen that if the CP conveys the optimal power allocation $p_{m,n}$ to RRH $m$ that cooperatively transmits on SC $n\in\mathcal{N}_k$, i.e., with $\alpha_{m,n}=1$, then this RRH additionally needs to know only the phase $\angle h_{k,m,n}$ of its channel coefficient to user $k$ for downlink transmission. Thus, the CP is assumed to have knowledge of the magnitudes of all the channel coefficients, while each RRH needs to know the phases of the channel coefficients to the users assigned to it, for the corresponding SCs. 

Let $\bm h_{k,n}=\begin{bmatrix}h_{k,1,n}&\cdots&h_{k,M,n}\end{bmatrix}^\mathsf{T}\in\mathds{C}^{M\times 1}$ denote the complex channel coefficient vector from the $M$ RRHs to the user $k$ on SC $n$. Similarly, let $\bm x_n=\begin{bmatrix}x_{1,n}&\cdots&x_{M,n}\end{bmatrix}^\mathsf{T}\in\mathds{C}^{M\times 1}$ denote the vector of transmitted signals by the $M$ RRHs to the user $k$ on SC $n\in\mathcal{N}_k$, with each component defined in~\eqref{E:TxSym}, and let $\bm p_n=\begin{bmatrix}p_{1,n}&\cdots&p_{M,n}\end{bmatrix}^\mathsf{T}\in\mathds{R}_+^{M\times 1}$ denote the transmit power allocation vector on SC $n$ for the $M$ RRHs. Then, the received signal at the user $k$ on SC $n\in\mathcal{N}_k$ is given by
\begin{align}
y_n=\bm h_{k,n}^\mathsf{T}\bm x_n+ z,\quad n\in\mathcal{N}_k
\end{align} 
where 
$z\sim\mathcal{CN}(0,\sigma^2)$ is the additive white Gaussian noise~(AWGN), and $\sigma^2$ is the receiver noise power, which is assumed to be equal at all users. The SNR on each SC $n\in\mathcal{N}_k$ is thus given by
\begin{align}
\gamma_{k,n}\left(\bm\alpha_n,\bm p_n\right)
&=\frac{\left|\bm h_{k,n}^\mathsf{T}\bm x_n\right|^2}{\sigma^2}\notag\\
&=\frac{1}{\sigma^2}\left(\sum_{m=1}^M\left|h_{k,m,n}\right|\alpha_{m,n}\sqrt{p_{m,n}}\right)^2.
\label{E:SCSNR}
\end{align}
The maximum achievable rate in bits per second~(bps) on SC $n\in\mathcal{N}_k$ for user $k$ is given by 
\begin{align}
r_{k,n}\left(\bm\alpha_n,\bm p_n\right)&=\frac{B}{N}\log_2\left(1+\gamma_{k,n}\left(\bm\alpha_n,\bm p_n\right)\right),
n\in\mathcal{N}_k.\label{E:SCRate}
\end{align}
Next, we present the following result on the concavity of the function $r_{k,n}\left(\bm\alpha_n,\bm p_n\right)$. 
\begin{lemma}\label{L:rknConc}
With given RRH selection $\bm\alpha_n$, $r_{k,n}\left(\bm\alpha_n,\bm p_n\right)$ defined in~\eqref{E:SCRate} is jointly concave with respect to $\left\{p_{m,n}\right\}$, $\forall m$ with $\alpha_{m,n}=1$. 
\end{lemma}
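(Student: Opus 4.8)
The plan is to reduce the claim to the concavity of the SNR function $\gamma_{k,n}$ and then invoke a standard composition rule. First I would fix the RRH selection $\bm\alpha_n$ and let $\mathcal{A}_n=\left\{m\in\mathcal{M}\mid\alpha_{m,n}=1\right\}$ be the active set, so that the free variables are $\left\{p_{m,n}\right\}_{m\in\mathcal{A}_n}$ with $p_{m,n}\geq 0$. Introducing the nonnegative constants $a_m\triangleq\left|h_{k,m,n}\right|/\sigma\geq 0$ for $m\in\mathcal{A}_n$, I would rewrite~\eqref{E:SCSNR} as $\gamma_{k,n}=\left(\sum_{m\in\mathcal{A}_n}a_m\sqrt{p_{m,n}}\right)^2$ and~\eqref{E:SCRate} as $r_{k,n}=\frac{B}{N}\log_2\left(1+\gamma_{k,n}\right)$. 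Since $t\mapsto\log_2(1+t)$ is increasing and concave on $\mathds{R}_+$, and the composition of an increasing concave function with a concave function is concave, it suffices to prove that $\gamma_{k,n}$ is jointly concave in $\left\{p_{m,n}\right\}_{m\in\mathcal{A}_n}$.

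The key step is establishing this concavity, and I would caution against the tempting shortcut of arguing that $\gamma_{k,n}$ is concave merely because it is the square of the concave function $\sum_{m\in\mathcal{A}_n}a_m\sqrt{p_{m,n}}$, since squaring does not in general preserve concavity. Instead I would expand the square to expose its structure,
\[
\gamma_{k,n}=\sum_{m\in\mathcal{A}_n}\sum_{l\in\mathcal{A}_n}a_m a_l\sqrt{p_{m,n}\,p_{l,n}},
\]
so that each summand is a nonnegative multiple of the geometric mean $\sqrt{p_{m,n}\,p_{l,n}}$ of two of the variables. The geometric mean $(x,y)\mapsto\sqrt{xy}$ is jointly concave on $\mathds{R}_+^2$, which I would verify by checking that its $2\times 2$ Hessian is negative semidefinite (its determinant vanishes and its diagonal entries are nonpositive); regarded as a function of the full vector $\left\{p_{m,n}\right\}_{m\in\mathcal{A}_n}$ it stays concave, as it depends on only two coordinates. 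Because a nonnegative weighted sum of concave functions is concave and all weights $a_m a_l$ are nonnegative, $\gamma_{k,n}$ is jointly concave.

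Combining the two steps then completes the argument: $\gamma_{k,n}$ is concave and $\log_2(1+\cdot)$ is concave and increasing, so $r_{k,n}=\frac{B}{N}\log_2\left(1+\gamma_{k,n}\right)$ is jointly concave in $\left\{p_{m,n}\right\}_{m\in\mathcal{A}_n}$. The main obstacle is precisely the concavity of $\gamma_{k,n}$, where the cross-term structure revealed by expanding the square is what makes the square of a concave function concave here. Should a Hessian-based route be preferred, an equivalent path is to write the Hessian of $\gamma_{k,n}$ in the form $2\bm b\bm b^\mathsf{T}-\gamma_{k,n}^{1/2}\,\mathsf{diag}\!\left(b_m/p_{m,n}\right)$ with $b_m\triangleq a_m/(2\sqrt{p_{m,n}})$, and to establish its negative semidefiniteness by a Cauchy--Schwarz inequality applied to $\left(\sum_m b_m v_m\right)^2\leq\left(\sum_m b_m p_{m,n}\right)\left(\sum_m \frac{b_m}{p_{m,n}}v_m^2\right)$.
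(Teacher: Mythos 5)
Your proof is correct, and for the key step it takes a genuinely different route from the paper. The paper proves joint concavity of $f(\bm p)=\left(\sum_{m=1}^M\sqrt{p_m}\right)^2$ by computing the Hessian explicitly, writing it as $\frac{1}{2}\left[\bm u\bm u^\mathsf{T}-\left(\sum_m\sqrt{p_m}\right)\mathsf{diag}\left(p_1^{-3/2},\dotsc,p_M^{-3/2}\right)\right]$ and showing $\bm v^\mathsf{T}\nabla^2 f(\bm p)\bm v\leq 0$ by Cauchy--Schwarz applied to $\bm a=\left(p_m^{1/4}\right)$ and $\bm b=\left(v_m/p_m^{3/4}\right)$; it then composes with the diagonal channel-gain scaling and the concave nondecreasing logarithm, exactly as you do in your final step. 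Your main argument instead expands $\gamma_{k,n}=\sum_{m,l}a_m a_l\sqrt{p_{m,n}p_{l,n}}$ and recognizes it as a nonnegative combination of pairwise geometric means (the diagonal terms $m=l$ degenerating to the linear functions $a_m^2 p_{m,n}$, still concave), which is more elementary: the only calculus needed is the $2\times 2$ Hessian check for $\sqrt{xy}$, and since the geometric mean is concave on the closed orthant, your route also handles the boundary $p_{m,n}=0$ directly, whereas the paper's Hessian computation formally requires $p_{m,n}>0$ and dismisses the boundary by redefinition. Your caution that squaring a concave function does not in general preserve concavity is well placed---it is precisely why the cross-term expansion (or a Hessian computation) is needed. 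Finally, your appended Hessian sketch, $\nabla^2\gamma_{k,n}=2\bm b\bm b^\mathsf{T}-\gamma_{k,n}^{1/2}\,\mathsf{diag}\left(b_m/p_{m,n}\right)$ with $b_m=a_m/\left(2\sqrt{p_{m,n}}\right)$ and the stated Cauchy--Schwarz bound, is a correct reparametrization of the paper's own argument (note $\sum_m b_m p_{m,n}=\gamma_{k,n}^{1/2}/2$, which makes the two terms cancel), so you have in effect given both proofs.
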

\begin{proof}
Please refer to appendix~\ref{App:ProofrknConc}.
\end{proof}
\Cref{L:rknConc} indicates that on each SC $n$, for all RRHs  with $\alpha_{m,n}=1$, i.e., transmitting with $p_{m,n}\geq 0$, the achievable rate by their cooperative transmission, $r_{k,n}\left(\bm\alpha_n,\bm p_n\right)$, is a jointly concave function over their $p_{m,n}$'s. This is a useful property we will utilize later in solving our proposed resource allocation problem.   
\subsection{mmWave Fronthaul}
The CP transmits the users' data to each of the RRHs over a LoS mmWave wireless fronthaul channel via TDMA, for a fraction of time $0\leq t_m\leq 1$ for RRH $m$ subject to 
\begin{align}
\sum_{m=1}^Mt_m\leq 1.\label{E:TStot}
\end{align}
Let $R_m>0$ denote the maximum fronthaul rate in bps, achievable on the mmWave fronthaul link from the CP to RRH $m$. Then, the average rate at which RRH $m$ can receive data from the CP over the mmWave fronthaul channel is $t_mR_m$. On the other hand, the total rate at which RRH $m$ transmits to the users over all $N$ SCs is given by 
\begin{align}
\sum_{n=1}^{N}\alpha_{m,n}\sum_{k=1}^K\nu_{k,n}r_{k,n}\left(\bm\alpha_n,\bm p_n\right)\label{E:FHReq}
\end{align}
where $r_{k,n}\left(\bm\alpha_n,\bm p_n\right)$ is defined in~\eqref{E:SCRate}. At each RRH $m$, the average rate over the mmWave fronthaul needs to be no smaller than that over the wireless access, i.e.,
\begin{align}
t_mR_m\geq \sum_{n=1}^{N}\alpha_{m,n}\sum_{k=1}^K\nu_{k,n}r_{k,n}\left(\bm\alpha_n,\bm p_n\right)\quad\forall m\in\mathcal{M}.\label{E:TSIneq}
\end{align}
It can be easily shown that the constraints in~\eqref{E:TStot} and~\eqref{E:TSIneq} are both satisfied by all the RRHs $m\in\mathcal{M}$ if and only if 
\begin{align}
\sum_{m=1}^M\frac{1}{R_m}\sum_{n=1}^N\alpha_{m,n}\sum_{k=1}^K\nu_{k,n}r_{k,n}\left(\bm\alpha_n,\bm p_n\right)\leq 1.\label{E:TSComb}
\end{align}
Notice that with a given feasible power allocation $\left\{\bar{\bm p}\right\}_{n\in\mathcal{N}}$, user-SC association $\left\{\bar{\bm\nu}_n\right\}_{n\in\mathcal{N}}$, and RRH selection $\left\{\bar{\bm\alpha}_n\right\}_{n\in\mathcal{N}}$ which jointly satisfy the constraint in~\eqref{E:TSComb}, the corresponding TDMA time allocation $\bar{t}_m$ on the mmWave fronthaul can be obtained for each RRH $m$ using the relation, 
\begin{align}
\bar{t}_m=\frac{1}{R_m}\sum_{n=1}^{N}\bar{\alpha}_{m,n}\sum_{k=1}^K\bar{\nu}_{k,n}r_{k,n}\left(\bar{\bm\alpha}_n,\bar{\bm p}_n\right),\quad m\in\mathcal{M}.\label{E:MinTime}
\end{align} 
In the next section, we formulate the joint mmWave fronthaul and OFDMA downlink resource allocation optimization problem for the UD-CRAN. 
\section{Problem Formulation}\label{Sec:ProbForm}
We aim to maximize the WSR of all users, by jointly optimizing the user-SC associations $\left\{
\bm\nu_n\right\}_{n\in\mathcal{N}}$, RRH-SC selections $\left\{
\bm\alpha_n\right\}_{n\in\mathcal{N}}$, and  power allocations of all RRHs at all SCs $\left\{
\bm p_n\right\}_{n\in\mathcal{N}}$, subject to the wireless fronthaul constraint~\eqref{E:TSComb}, and the total power constraint at each individual RRH denoted by $\bar{P}_m,~m\in\mathcal{M}$. Let $\omega_k\geq 0,~k\in\mathcal{K}$, denote the rate weight assigned to user $k$. Then, the problem is formulated as 
\begin{subequations}
\label{P:Main}
\begin{align}
\mathop{\mathrm{maximize}}_{\left\{\bm p_n,\bm\alpha_n,\bm\nu_n\right\}_{n\in\mathcal{N}}}&\enspace\sum_{k=1}^K\omega_k\sum_{n=1}^N\nu_{k,n}r_{k,n}\left(\bm\alpha_n,\bm p_n\right)\tag{\ref*{P:Main}}\\
\mathrm{subject}~\mathrm{to}\nonumber\\
&\enspace\eqref{E:TSComb}\nonumber\\
&\enspace\sum_{n=1}^N p_{m,n}\leq \bar{P}_m\quad\forall m\in\mathcal{M}\label{C:AvgPCMain}\\
&\enspace p_{m,n}\geq 0\quad\forall m\in\mathcal{M},\enspace\forall n\in\mathcal{N}\label{C:PowerMain}\\
&\enspace\alpha_{m,n}\in\{0,1\}\quad\forall m\in\mathcal{M},\enspace\forall n\in\mathcal{N}\label{C:RRHSelMain}\\
&\enspace\bm 1^\mathsf{T}\bm\nu_n\leq 1\quad\forall n\in\mathcal{N}\label{C:USCSUMain}\\
&\enspace\nu_{k,n}\in\{0,1\}\quad\forall k\in\mathcal{K},\enspace\forall n\in\mathcal{N}.\label{C:SCAMain}
\end{align}
\end{subequations}
Problem~\eqref{P:Main} is non-convex due to the integer constraints~\cref{C:RRHSelMain,C:SCAMain}, as well as the coupled variables in both the objective function as well as the constraint~\eqref{E:TSComb}. An exhaustive search over all possible user-SC associations and RRH-SC selections requires $O\left(\left(2^MK\right)^N\right)$ operations, which can be prohibitive for large values of $M$ or $N$. Notice that even if the user association $\bm\nu_n$ and the RRH selection $\bm\alpha_n$ on each SC $n$ is given, the left-hand-side of constraint~\eqref{E:TSComb} is a non-negative weighted sum of concave functions $r_{k,n}\left(\bm\alpha_n,\bm p_n\right)$ according to~\Cref{L:rknConc}, and hence concave, which makes the constraint~\eqref{E:TSComb} still non-convex. Thus, problem~\eqref{P:Main} is non-convex even if the user-SC associations and the RRH-SC selections are all fixed. 
\section{Proposed Solutions}\label{Sec:PropSol}
\subsection{Optimal Solution}\label{SS:OptSol}
Although problem~\eqref{P:Main} is non-convex, it can be verified that strong duality holds when the number of SCs $N$ goes to infinity, as it satisfies the ``time-sharing" conditions as given in~\cite{yu-lui2006dual}. As $N$ is typically large in practice, we propose to apply the Lagrange duality method to solve problem~\eqref{P:Main} by assuming zero duality gap.\footnote{We emphasize that the optimality of the proposed Lagrange duality based solution is in the asymptotic sense, for sufficiently large $N$. However, as shown in~\Cref{Sec:SimResults}, the duality gap is negligible for $N=128$, and hence the proposed solution is close to optimal for practical values of $N$.} 
Let $\lambda\geq 0$ denote the dual variable associated with constraint~\eqref{E:TSComb}, and $\mu_m\geq 0,~m\in\mathcal{M}$, denote the dual variables for the constraints in~\eqref{C:AvgPCMain}. Also define $\bm \mu\triangleq\begin{bmatrix}\mu_1&\cdots&\mu_M\end{bmatrix}^\mathsf{T}\in\mathds{R}_+^{M\times 1}$. 
Then, the~(partial) Lagrangian of problem~\eqref{P:Main} is given by 
\begin{align}
&L\left(\left\{\bm\nu_n,\bm\alpha_n,\bm p_n\right\}_{n\in\mathcal{N}},\lambda,\bm \mu\right)\notag\\
&=\sum_{n=1}^N\sum_{k=1}^K\omega_k\nu_{k,n}r_{k,n}\left(\bm\alpha_n,\bm p_n\right)\notag\\
&\quad-\lambda\left(\sum_{n=1}^N\sum_{m=1}^M\frac{\alpha_{m,n}}{R_m}\sum_{k=1}^K\nu_{k,n}r_{k,n}\left(\bm\alpha_n,\bm p_n\right)-1\right)\notag\\
&\quad-\sum_{m=1}^M\mu_m\left(\sum_{n=1}^N p_{m,n}-\bar{P}_m\right)\notag\\
&=\sum_{n=1}^N L_n\left(\bm\nu_n,\bm\alpha_n,\bm p_n,\lambda,\bm \mu\right)+\lambda+\sum_{m=1}^M\mu_m\bar{P}_m,
\label{E:LagW}
\end{align} 
where 
\begin{align}
L_n\left(\bm\nu_n,\bm\alpha_n,\bm p_n,\lambda,\bm \mu\right)&\triangleq\sum_{k=1}^K\omega_k\nu_{k,n}r_{k,n}\left(\bm\alpha_n,\bm p_n\right)\notag\\
&\quad-\lambda\sum_{m=1}^M\frac{\alpha_{m,n}}{R_m}\sum_{k=1}^K\nu_{k,n}r_{k,n}\left(\bm\alpha_n,\bm p_n\right)\notag\\
&\quad-\sum_{m=1}^M\mu_m p_{m,n}.\label{E:Lagn}
\end{align}
The Lagrange dual function is thus given by  
\begin{subequations}
\label{E:DualFunc} 
\begin{align}
g(\lambda,\bm\mu)=~
\max_{\left\{\bm p_n,\bm\alpha_n,\bm\nu_n\right\}_{n\in\mathcal{N}}}&\enspace
L\left(\left\{\bm\nu_n,\bm\alpha_n,\bm p_n,\right\}_{n\in\mathcal{N}},\lambda,\bm\mu\right)\tag{\ref*{E:DualFunc}}\\
\mathrm{s.t.}
&\enspace\text{\cref{C:USCSUMain,C:PowerMain,C:SCAMain,C:RRHSelMain}}.\notag
\end{align}
\end{subequations}
The maximization problem in~\eqref{E:DualFunc} can be decomposed into $N$ parallel sub-problems, where each sub-problem corresponds to a single SC $n\in\mathcal{N}$, and all of them have the same structure given by
\begin{subequations} 
\label{P:DualFuncn}
\begin{align}
\max_{\bm p_n,\bm\alpha_n,\bm\nu_n}&\enspace
L_n\left(\bm\nu_n,\bm\alpha_n,\bm p_n,\lambda,\bm \mu\right)\tag{\ref*{P:DualFuncn}}\nonumber\\
\mathrm{s.t.}
&\enspace\bm p_n\succeq\bm 0\label{C:VectPNZDFn}\\
&\enspace\bm\alpha_n\in\{0,1\}^{M\times 1}\label{C:BinRRHSelVectDFn}\\
&\enspace\bm 1^\mathsf{T}\bm\nu_n\leq 1\\
&\enspace\bm\nu_n\in\{0,1\}^{K\times 1}
\end{align}
\end{subequations}
where $
L_n\left(\bm\nu_n,\bm\alpha_n,\bm p_n,\lambda,\bm \mu\right)$ is defined in~\eqref{E:Lagn}. 

Now let the user association on SC $n$ be fixed as $\bm\nu_n=\hat{\bm\nu}_n$. If $\hat{\bm\nu}_n=\bm 0$, no user is assigned to SC $n$. In this case, since $\bm\mu\succeq\bm 0$ and $\bm p_n\succeq\bm 0$, the objective of problem~\cref{P:DualFuncn}, as given in~\cref{E:Lagn} is maximized by setting $\bm p_n=\bm 0$, irrespective of the RRH selection $\bm\alpha_n$. Thus, if no user is assigned to SC $n$, the power allocation over all RRHs is zero, as expected. Otherwise, let $\hat{k}_n\in\mathcal{K}$ be the user assigned to SC $n$ so that $\hat{\nu}_{\hat{k}_n,n}=1$ and $\hat{\nu}_{k,n}=0\enspace\forall k\neq\hat{k}_n$. Using this in~\eqref{E:Lagn}, sub-problem~\eqref{P:DualFuncn} on each SC $n$ can be reduced to the following problem
\begin{subequations}
\label{P:DualFuncnFixUA}
\begin{align}
\max_{\bm p_n,\bm\alpha_n}&\enspace\left(\omega_{\hat{k}_n}-\lambda\sum_{m=1}^M\frac{\alpha_{m,n}}{R_m}\right)r_{\hat{k}_n,n}\left(\bm\alpha_n,\bm p_n\right)-\sum_{m=1}^M\mu_m p_{m,n}\tag{\ref*{P:DualFuncnFixUA}}\\
\mathrm{s.t.}&\enspace\text{\cref{C:VectPNZDFn,C:BinRRHSelVectDFn}},\notag
\end{align}
\end{subequations}
which is non-convex due to the integer constraints~\eqref{C:BinRRHSelVectDFn} on $\bm\alpha_n$ and the coupled variables in the objective. However, for a given RRH selection $\tilde{\bm\alpha}_n$, problem~\eqref{P:DualFuncnFixUA} can be written as 
\begin{subequations}
\label{P:DualFuncnFixUARRHSel}
\begin{align}
\max_{\bm p_n\succeq\bm 0}&\enspace\left(\omega_{\hat{k}_n}-\lambda\sum_{m=1}^M\frac{\tilde{\alpha}_{m,n}}{R_m}\right)r_{\hat{k}_n,n}\left(\tilde{\bm\alpha}_n,\bm p_n\right)-\sum_{m=1}^M\mu_m p_{m,n},\tag{\ref*{P:DualFuncnFixUARRHSel}}
\end{align}
\end{subequations}
where $r_{\hat{k}_n,n}\left(\tilde{\bm\alpha}_n,\bm p_n\right)$ is the rate on SC $n$ under RRH selection $\tilde{\bm\alpha}_n$, with user $\hat{k}_n$ assigned to SC $n$, as defined in~\eqref{E:SCRate}.
Then, the optimal power allocation $\tilde{\bm p}_n$ that solves problem~\eqref{P:DualFuncnFixUARRHSel} is given by the following proposition. 
\begin{proposition}\label{Prop:OptPAFixURRHSel}
Let $\bm\alpha_n=\tilde{\bm\alpha}_n$ be fixed. Then, for given $\lambda$ and $\bm\mu$, the optimal power allocation $\tilde{\bm p}_n$ on SC $n$ for problem~\eqref{P:DualFuncnFixUARRHSel} is given by
\begin{align}
\tilde{p}_{m,n}&=\frac{\tilde{\alpha}_{m,n}\left|h_{\hat{k}_n,m,n}\right|^2}{\sigma^2\mu_m^2\left[G_{\hat{k}_n,n}\left(\tilde{\bm\alpha}_n\right)\right]^2}\notag\\
&\quad\cdot\left[\frac{B}{N\ln 2}F_{\hat{k}_n,n}\left(\tilde{\bm\alpha}_n\right)G_{\hat{k}_n,n}\left(\tilde{\bm\alpha}_n\right)-1\right]^+
\label{E:OptPA}
\end{align}
$\forall m\in\mathcal{M}$, where
$F_{\hat{k}_n,n}\left(\bm\alpha_n\right)$, $G_{\hat{k}_n,n}\left(\bm\alpha_n\right)$ are defined as
\begin{align}
F_{\hat{k}_n,n}\left(\bm\alpha_n\right)
&\triangleq~\omega_{\hat{k}_n}-\lambda\sum_{m=1}^M\frac{\alpha_{m,n}}{R_m}
\label{E:FHParam}\\
G_{\hat{k}_n,n}\left(\bm\alpha_n\right)
&\triangleq~\sum_{m=1}^M\frac{\alpha_{m,n}\left|h_{\hat{k}_n,m,n}\right|^2}{\sigma^2\mu_m}.
\label{E:AccParam}
\end{align}
\end{proposition}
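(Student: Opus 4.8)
The plan is to solve problem~\eqref{P:DualFuncnFixUARRHSel} by first disposing of the trivial components of $\bm p_n$ and then applying the Karush--Kuhn--Tucker~(KKT) conditions to the remaining concave program. I would proceed in three cases. First, for any RRH $m$ with $\tilde\alpha_{m,n}=0$, the rate $r_{\hat{k}_n,n}(\tilde{\bm\alpha}_n,\bm p_n)$ in~\eqref{E:SCRate} does not depend on $p_{m,n}$, since via~\eqref{E:SCSNR} the power $p_{m,n}$ enters the SNR multiplied by $\alpha_{m,n}$. As $\mu_m\geq0$, the term $-\mu_m p_{m,n}$ is non-increasing in $p_{m,n}$, so the objective is maximized by $\tilde p_{m,n}=0$, which is exactly what the factor $\tilde\alpha_{m,n}$ in~\eqref{E:OptPA} enforces. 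Second, if the scalar coefficient $F_{\hat{k}_n,n}(\tilde{\bm\alpha}_n)<0$, then since $r_{\hat{k}_n,n}$ is concave and non-negative by~\Cref{L:rknConc}, the objective of~\eqref{P:DualFuncnFixUARRHSel} is at most $0$ for all $\bm p_n\succeq\bm 0$, with equality at $\bm p_n=\bm 0$; hence $\tilde{\bm p}_n=\bm 0$ is optimal. This is consistent with~\eqref{E:OptPA}, where $F_{\hat{k}_n,n}(\tilde{\bm\alpha}_n)<0$ and $G_{\hat{k}_n,n}(\tilde{\bm\alpha}_n)\geq0$ make the bracketed term negative, so that $[\cdot]^+=0$.

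The main case is $F_{\hat{k}_n,n}(\tilde{\bm\alpha}_n)\geq 0$. Here, by~\Cref{L:rknConc}, $F_{\hat{k}_n,n}(\tilde{\bm\alpha}_n)\,r_{\hat{k}_n,n}$ is concave and the objective in~\eqref{P:DualFuncnFixUARRHSel} is concave in the active powers $\{p_{m,n}:\tilde\alpha_{m,n}=1\}$, so the KKT conditions are both necessary and sufficient for optimality. Writing $S\triangleq\sum_{m=1}^M\tilde\alpha_{m,n}|h_{\hat{k}_n,m,n}|\sqrt{p_{m,n}}$ so that $\gamma_{\hat{k}_n,n}=S^2/\sigma^2$, the stationarity condition for each active $p_{m,n}>0$ becomes, after applying the chain rule to~\eqref{E:SCRate},
\begin{align}
F_{\hat{k}_n,n}(\tilde{\bm\alpha}_n)\,\frac{B}{N\ln 2}\,\frac{S\,|h_{\hat{k}_n,m,n}|}{\sigma^2\left(1+S^2/\sigma^2\right)\sqrt{p_{m,n}}}=\mu_m.\notag
\end{align}

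The key step is to decouple these $M$ coupled equations. Solving the stationarity condition for $\sqrt{p_{m,n}}$ shows that $\sqrt{p_{m,n}}=C\,|h_{\hat{k}_n,m,n}|/\mu_m$ for a common constant $C$ independent of $m$; substituting this back into the definition of $S$ yields $S=C\,\sigma^2 G_{\hat{k}_n,n}(\tilde{\bm\alpha}_n)$ via~\eqref{E:AccParam}. Eliminating $C$ collapses the system to a single scalar self-consistency equation whose solution gives the optimal SNR $\gamma_{\hat{k}_n,n}^\star=\frac{B}{N\ln2}F_{\hat{k}_n,n}(\tilde{\bm\alpha}_n)G_{\hat{k}_n,n}(\tilde{\bm\alpha}_n)-1$. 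Back-substituting $S^2=\sigma^2\gamma_{\hat{k}_n,n}^\star$ into $\tilde p_{m,n}=C^2|h_{\hat{k}_n,m,n}|^2/\mu_m^2$ produces~\eqref{E:OptPA}. Finally, complementary slackness supplies the $[\cdot]^+$ operator: the interior stationary point is valid only when $\gamma_{\hat{k}_n,n}^\star\geq0$, i.e.\ $\frac{B}{N\ln2}F_{\hat{k}_n,n}(\tilde{\bm\alpha}_n)G_{\hat{k}_n,n}(\tilde{\bm\alpha}_n)\geq1$; otherwise no positive-power KKT point exists and, by concavity, the optimum reverts to $\tilde{\bm p}_n=\bm 0$.

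I expect the main obstacle to be this decoupling step: the power variables are coupled both through the shared logarithm and through the common dual prices, so the individual stationarity conditions are interdependent. The crux is recognizing that the optimal amplitudes $\sqrt{p_{m,n}}$ all share the proportionality $|h_{\hat{k}_n,m,n}|/\mu_m$, which reduces the $M$-dimensional fixed-point problem to a single equation for $S$, equivalently for the combined SNR $\gamma_{\hat{k}_n,n}^\star$. Verifying the boundary behaviour to justify the $[\cdot]^+$ threshold, and confirming that the inactive-RRH and $F_{\hat{k}_n,n}(\tilde{\bm\alpha}_n)<0$ corner cases are subsumed by the unified formula, are the remaining details.
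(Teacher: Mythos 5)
Your proposal is correct and takes essentially the same route as the paper's proof in Appendix~\ref{A:ProofPAlloc}: a case split on the sign of $F_{\hat{k}_n,n}\left(\tilde{\bm\alpha}_n\right)$, first-order stationarity conditions showing $\sqrt{p_{m,n}}\propto\left|h_{\hat{k}_n,m,n}\right|/\mu_m$ with a common constant, and back-substitution into the SNR to obtain the self-consistency equation $\tilde{\gamma}_{\hat{k}_n,n}=\frac{B}{N\ln 2}F_{\hat{k}_n,n}\left(\tilde{\bm\alpha}_n\right)G_{\hat{k}_n,n}\left(\tilde{\bm\alpha}_n\right)-1$, with the $[\cdot]^+$ threshold subsuming the zero-power and $F\leq 0$ cases exactly as in~\eqref{E:OptSNRSUFinal}--\eqref{E:CondSNRSCSU}. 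Your explicit common-constant decoupling is algebraically identical to the paper's substitution of~\eqref{E:OptPA1SU} into~\eqref{E:OptSNRSU}, so there is no substantive difference in method.
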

\begin{proof}
Please refer to appendix~\ref{A:ProofPAlloc}.
\end{proof}
\Cref{Prop:OptPAFixURRHSel} shows that for given user association $\hat{k}_n$ and RRH selection $\tilde{\bm\alpha}_n$ on each SC $n$, the optimal power allocation has a threshold structure, which allocates zero power to all RRHs on SC $n$ if $F_{\hat{k}_n,n}\left(\tilde{\bm\alpha}_n\right)G_{\hat{k}_n,n}\left(\tilde{\bm\alpha}_n\right)\leq(N\ln 2)/B$. Notice  that $F_{\hat{k}_n,n}\left(\bm\alpha_n\right)$ depends on the fronthaul rates $R_m$'s, while 
$G_{\hat{k}_n,n}\left(\bm\alpha_n\right)$ depends on the wireless access channel gains $\left|h_{\hat{k}_n,n}\right|$'s. Otherwise, if $F_{\hat{k}_n,n}\left(\tilde{\bm\alpha}_n\right)G_{\hat{k}_n,n}\left(\tilde{\bm\alpha}_n\right)>(N\ln 2)/B$, the power allocation on each RRH $m\in\mathcal{M}$, with $\tilde{\alpha}_{m,n}=1$, is proportional to the ratio $\tilde{\alpha}_{m,n}\tfrac{\left|h_{\hat{k}_n,m,n}\right|^2}{\sigma^2\mu_m^2}$, which depends on the wireless access channel gain $\left|h_{\hat{k}_n,m,n}\right|$ on SC $n$ and the dual variable $\mu_m$ corresponding to the transmit power constraint~\eqref{C:AvgPCMain} for RRH $m$. If $\tilde{\bm\alpha}_n=\bm 0$, i.e., no RRH is selected, then $\tilde{\bm p}_n=\bm 0$. When there are no fronthaul constraints, i.e., $\lambda=0$, the optimal power allocation in~\eqref{E:OptPA} reduces to 
\begin{align}
\tilde{p}_{m,n}=\frac{\tilde{\alpha}_{m,n}\left|h_{\hat{k}_n,m,n}\right|^2}{\sigma^2\mu_m^2\left[G_{\hat{k}_n,n}\left(\tilde{\bm\alpha}_n\right)\right]^2}\left[\frac{B\omega_{\hat{k}_n}}{N\ln 2}G_{\hat{k}_n,n}\left(\tilde{\bm\alpha}_n\right)-1\right]^+.
\end{align} 
For the special case when there is only one RRH in the cluster, i.e., $M=1$, the power allocation in~\eqref{E:OptPA} becomes\footnote{Drop the subscript $m$ since $M=1$.}
\begin{align}
\tilde{p}_n=\left[\frac{B}{\mu N\ln 2}\left(\omega_{\hat{k}_n}-\frac{\lambda}{R}\right)-\frac{\sigma^2}{\left|h_{\hat{k}_n,n}\right|^2}\right]^+,\label{E:OptPASingleRRH}
\end{align} 
which has the same form as the well-known water-filling solution~\cite{goldsmith2005wireless}, but in general with different water levels on different SCs $n\in\mathcal{N}$, which are determined by the user $\hat{k}_n$ assigned to SC $n$, through $\omega_{\hat{k}_n}$. Also notice that the water-level on SC $n$ can be negative if $\omega_{\hat{k}_n}\leq\lambda/R$; in this case, no power should be allocated to SC $n$. 

From~\cref{Prop:OptPAFixURRHSel}, it can be seen that the optimal power allocation is zero whenever a particular RRH $m$ is not selected on SC $n$, i.e., if $\tilde{\alpha}_{m,n}=0$, it implies that $\tilde{p}_{m,n}=0$, as expected. However, the opposite is not necessarily true, i.e., if it turns out that $\tilde{p}_{m,n}=0$ for some given RRH selection $\tilde{\bm\alpha}_n$, it cannot be inferred that $\tilde{\alpha}_{m,n}=0$ in the optimal RRH selection. 
This is due to the fact that the power allocation $\tilde{\bm p}_n$ given by~\eqref{E:OptPA} depends on the entire RRH selection $\tilde{\bm\alpha}_n$, and even if $\tilde{p}_{m,n}=0$, it cannot be known beforehand whether this particular RRH selection and power allocation is the one that maximizes the objective in~\eqref{P:DualFuncnFixUA}. Thus,~\cref{Prop:OptPAFixURRHSel} only gives the optimal power allocation when the RRH selection is known, and when $M>1$, an exhaustive search over all possible vectors $\bm\alpha_n\in\{0,1\}^M$ is required to find the RRH selection and corresponding power allocation that maximizes the objective in~\eqref{P:DualFuncnFixUA}. When $M=1$, however, the RRH selection on each SC is implicitly decided by whether the optimal power allocation given by~\eqref{E:OptPASingleRRH} is zero or not. 

For given dual variables $\lambda$ and $\bm\mu$, problem~\eqref{P:DualFuncn} can be solved optimally using~\cref{Prop:OptPAFixURRHSel} as follows. First, fix the user on SC $n$ as $\hat{k}_n\in\mathcal{K}$. Then, for each of the $2^M$ possible RRH selections, compute the optimal power allocation $\tilde{\bm p}_n$ using~\eqref{E:OptPA}, and choose the optimal RRH selection $\hat{\bm\alpha}_n$ for the user $\hat{k}_n$ as the one that maximizes the objective of problem~\eqref{P:DualFuncnFixUA} with the corresponding power allocation $\hat{\bm p}_n$ given by~\eqref{E:OptPA}. Then the optimal user association $\bar{\bm\nu}_n$ on SC $n$ can be found by choosing the user $\bar{k}_n$ that maximizes the objective of problem~\eqref{P:DualFuncn}, with its corresponding optimal RRH selection and power allocation computed before, and denoted by $\bar{\bm\alpha}_n$ and $\bar{\bm p}_n$. 

Now, the dual problem for~\eqref{P:Main} is given by 
\begin{align}
\min_{\lambda\geq 0,\bm\mu\succeq\bm 0}g(\lambda,\bm\mu),\label{E:DualProb}
\end{align}
which is convex and can be solved efficiently, e.g., using the ellipsoid method~\cite{boyd2014ellipsoid} to find the optimal dual variables $\lambda^\star$ and $\bm\mu^\star$. Then, the optimal solution to problem~\eqref{P:DualFuncn} is given by $\left(\bar{\bm\nu}_n,\bar{\bm\alpha}_n,\bar{\bm p}_n\right)$, computed as outlined above at the optimal dual variables $\lambda^\star$ and $\bm\mu^\star$. The algorithm for solving problem~\eqref{P:Main} can thus be summarized as given in~\cref{A:Overall}. 
\begin{table}[h]
\caption{Algorithm for problem~\eqref{P:Main}}\label{A:Overall}
\begin{framed}
\begin{algorithmic}[1]
\State Initialization: $\lambda\geq 0$, $\bm\mu\succeq\bm 0$
\Repeat 
\For {each $n\in\mathcal{N}$}
\For {each user association $\hat{\bm\nu}_n$}
\State Find optimal RRH selection $\hat{\bm\alpha}_n$ that maximizes~\eqref{P:DualFuncnFixUA} by searching over $2^M$ possible RRH selections, with optimal power allocation $\hat{\bm p}_n$ given by~\eqref{E:OptPA}\label{AL:ORRHSelP}
\EndFor
\State Find optimal user association $\bar{\bm\nu}_n$ that maximizes~\eqref{P:DualFuncn} with optimal RRH selection $\bar{\bm\alpha}_n$ and power allocation $\bar{\bm p}_n$ computed in~\cref{AL:ORRHSelP}
\EndFor
\State Update dual variables $\lambda$ and $\bm\mu$ using the ellipsoid method
\Until ellipsoid algorithm converges to desired accuracy
\end{algorithmic}
\end{framed}
\end{table}

Finding the optimal RRH selection and power allocation for a given user association involves a search over $2^M$ values, and incurs a complexity of $O\left(2^M\right)$. Subsequently, finding the optimal user association involves a search over $K$ users, and has a complexity of $O(K)$. Each of the $N$ problems~\eqref{P:DualFuncn} can thus be solved incurring a complexity of $O\left(K2^M\right)$. Hence, the dual function $g(\lambda,\bm\mu)$ in~\eqref{E:DualFunc} can be computed for each given pair of $\lambda$ and $\bm\mu$ with an overall complexity of $O\left(NK2^M\right)$. The complexity of the ellipsoid method to find the optimal dual variables depends only on the size of the initial ellipsoid and the maximum length of the sub-gradients over the intial ellipsoid~\cite{boyd2014ellipsoid}. Thus, the overall complexity of solving problem~\eqref{P:Main} is effectively given by $O\left(NK2^M\right)$. For small cluster sizes with $M\leq 5$, this complexity is not very high, and problem~\eqref{P:Main} can be efficiently solved using the algorithm in~\cref{A:Overall}. In the next section we introduce a lower complexity greedy algorithm that still performs very well in practical setups. 
\subsection{Suboptimal Solution}\label{SS:SubOptSubP}
Optimally solving the sub-problem~\eqref{P:DualFuncnFixUA} on each SC for a fixed user association requires an exhaustive search over all possible RRH selections, which is $O\left(2^M\right)$. To avoid this, we propose another way to solve problem~\eqref{P:DualFuncnFixUA} suboptimally in this sub-section, by using a greedy algorithm instead. First, we observe that since~\eqref{E:OptPA} gives the optimal power allocation for a given set $\mathcal{A}_n$ of RRHs that are selected to transmit on SC $n$,  substituting~\eqref{E:OptPA} in~\eqref{E:SCSNR}, the optimal SNR on SC $n$ for a non-zero power allocation can be expressed in terms of $\mathcal{A}_n$, as 
\begin{align}
\tilde{\gamma}_{\hat{k}_n,n}\left(\mathcal{A}_n\right)=\frac{B}{N\ln 2}F_{\hat{k}_n,n}\left(\mathcal{A}_n\right)G_{\hat{k}_n,n}\left(\mathcal{A}_n\right)-1.\label{E:OptSNRSetFunc}
\end{align}
Thus, $\tilde{\gamma}_{\hat{k}_n,n}\left(\mathcal{A}_n\right):2^\mathcal{M}\mapsto\mathds{R}$ in~\eqref{E:OptSNRSetFunc} is a set function that maps the set of all subsets of the set of RRHs $\mathcal{M}$ to a real number. The set functions $F_{\hat{k}_n,n}\left(\mathcal{A}_n\right):2^\mathcal{M}\mapsto\mathds{R}$ and $G_{\hat{k}_n,n}\left(\mathcal{A}_n\right):2^\mathcal{M}\mapsto\mathds{R}_+$ are similarly expressed in terms of $\mathcal{A}_n$ using~\cref{E:FHParam,E:AccParam}. Next, we introduce the following definitions pertaining to set functions. 
\begin{definition}
\label{Def:MonSubMod}
Let $\mathcal{V}$ denote a finite set and $f:2^\mathcal{V}\mapsto\mathds{R}$ be a real-valued set function. Then 
\begin{enumerate}
\item $f$ is \emph{monotone} if and only if 
\begin{align}
f\left(\mathcal{S}\right)\leq f\left(\mathcal{T}\right)\enspace\forall\mathcal{S}\subseteq\mathcal{T}\subseteq\mathcal{V}.\label{E:Mon}
\end{align}
\item $f$ is \emph{submodular} if and only if~\cite{bach2013learning}
\begin{align}
f\left(\mathcal{S}\cup\{i\}\right)-f\left(\mathcal{S}\right)&\geq f\left(\mathcal{S}\cup\{i,j\}\right)-f\left(\mathcal{S}\cup\{j\}\right),\notag\\
&\forall \mathcal{S}\subseteq\mathcal{V},~i\neq j,i,j\in\mathcal{V}\setminus\mathcal{S}.\label{E:SubMod}
\end{align}
In addition, $f$ is \emph{supermodular} if $-f$ is submodular, and $f$ is \emph{modular} if it is both submodular and supermodular.
\end{enumerate}
\end{definition}
Condition~\eqref{E:Mon} implies that a monotone set function is non-decreasing when elements are added to a set. For a submodular function, condition~\eqref{E:SubMod} implies that the incremental difference in its value when a new element is added to a smaller set, is in general, higher than the corresponding difference when the same element is added to a larger set. According to~\cref{Def:MonSubMod}, it is readily observed that the functions $-F_{\hat{k}_n,n}\left(\mathcal{A}_n\right)$ and $G_{\hat{k}_n,n}\left(\mathcal{A}_n\right)$ in~\cref{E:FHParam,E:AccParam} are monotone, while $\tilde{\gamma}_{\hat{k}_n,n}\left(\mathcal{A}_n\right)$ is not. Similarly, it is easy to see from~\eqref{E:SubMod} that both $F_{\hat{k}_n,n}\left(\mathcal{A}_n\right)$ and $G_{\hat{k}_n,n}\left(\mathcal{A}_n\right)$ are modular, while the following proposition shows that $\tilde{\gamma}_{\hat{k}_n,n}\left(\mathcal{A}_n\right)$ is submodular. 
\begin{proposition}\label{Prop:OptSNRSubMod}
The function $\tilde{\gamma}_{\hat{k}_n,n}\left(\mathcal{A}_n\right)$ in~\eqref{E:OptSNRSetFunc} is submodular. 
\end{proposition}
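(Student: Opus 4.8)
The plan is to reduce the claim to a direct verification of the diminishing-returns inequality~\eqref{E:SubMod} of \cref{Def:MonSubMod}. Writing $c\triangleq B/(N\ln 2)>0$, I read off from~\eqref{E:OptSNRSetFunc} that $\tilde{\gamma}_{\hat{k}_n,n}\left(\mathcal{A}_n\right)=c\,F_{\hat{k}_n,n}\left(\mathcal{A}_n\right)G_{\hat{k}_n,n}\left(\mathcal{A}_n\right)-1$. Because multiplying a set function by a positive constant and adding a constant both scale every second difference appearing in~\eqref{E:SubMod} by the same positive factor $c$ (and hence preserve its sign), it suffices to show that the product $P\left(\mathcal{A}_n\right)\triangleq F_{\hat{k}_n,n}\left(\mathcal{A}_n\right)G_{\hat{k}_n,n}\left(\mathcal{A}_n\right)$ is submodular.

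Next I would exploit the modular structure of the two factors, already noted in the text. From~\cref{E:FHParam,E:AccParam}, for any $\mathcal{S}\subseteq\mathcal{M}$ and any $m\notin\mathcal{S}$ we have $F_{\hat{k}_n,n}\left(\mathcal{S}\cup\{m\}\right)=F_{\hat{k}_n,n}\left(\mathcal{S}\right)-b_m$ and $G_{\hat{k}_n,n}\left(\mathcal{S}\cup\{m\}\right)=G_{\hat{k}_n,n}\left(\mathcal{S}\right)+g_m$, where $b_m\triangleq\lambda/R_m\geq 0$ and $g_m\triangleq\left|h_{\hat{k}_n,m,n}\right|^2/(\sigma^2\mu_m)\geq 0$, since $\lambda\geq 0$, $R_m>0$, and $\mu_m\geq 0$. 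The sign information I will rely on is that inserting an RRH always \emph{decreases} $F_{\hat{k}_n,n}$ and \emph{increases} $G_{\hat{k}_n,n}$.

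The core step is to compute the marginal gain of $P$. Fix distinct $i,j\in\mathcal{M}\setminus\mathcal{S}$, abbreviate $F=F_{\hat{k}_n,n}\left(\mathcal{S}\right)$, $G=G_{\hat{k}_n,n}\left(\mathcal{S}\right)$, and let $\Delta_i(\mathcal{S})\triangleq P\left(\mathcal{S}\cup\{i\}\right)-P\left(\mathcal{S}\right)$. A one-line expansion using the displayed modular identities gives
\begin{align}
\Delta_i(\mathcal{S})=(F-b_i)(G+g_i)-FG=g_iF-b_iG-b_ig_i.\notag
\end{align}
Applying the same identity with $\mathcal{S}$ replaced by $\mathcal{S}\cup\{j\}$ (so that $F\mapsto F-b_j$ and $G\mapsto G+g_j$) and subtracting, I expect the terms containing $F$ and $G$ to cancel, leaving
\begin{align}
\Delta_i(\mathcal{S})-\Delta_i\left(\mathcal{S}\cup\{j\}\right)=b_jg_i+b_ig_j\geq 0.\notag
\end{align}
Since the right-hand side of~\eqref{E:SubMod} for $P$ is exactly $P\left(\mathcal{S}\cup\{i,j\}\right)-P\left(\mathcal{S}\cup\{j\}\right)=\Delta_i\left(\mathcal{S}\cup\{j\}\right)$, the inequality above is precisely the submodularity condition for $P$, and therefore for $\tilde{\gamma}_{\hat{k}_n,n}$. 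I do not foresee a genuine obstacle, but the one point demanding care is that a product of two modular functions is \emph{not} submodular in general: the argument succeeds only because the two factors have opposite monotonicity, i.e.\ the non-positive increments $-b_m$ of $F_{\hat{k}_n,n}$ are paired against the non-negative increments $g_m$ of $G_{\hat{k}_n,n}$, and this is exactly what forces the residual $b_jg_i+b_ig_j$ to be non-negative.
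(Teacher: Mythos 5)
Your proposal is correct and follows essentially the same route as the paper's proof in Appendix~C: both reduce the claim (via the positive affine transformation $c\,P-1$) to submodularity of the product $F_{\hat{k}_n,n}G_{\hat{k}_n,n}$, expand the marginal gain using the opposite-signed modular increments $-\lambda/R_m$ and $\left|h_{\hat{k}_n,m,n}\right|^2/(\sigma^2\mu_m)$, and observe that the second difference equals the non-negative residual $b_jg_i+b_ig_j$, which is exactly the term in parentheses in~\eqref{E:IncDiffRel}. Your closing remark that the argument hinges on the factors' opposite monotonicity (a product of modular functions being non-submodular in general) is a correct and worthwhile observation that the paper leaves implicit.
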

\begin{proof}
Please refer to appendix~\ref{App:ProofOptSNRSubMod}.
\end{proof}
Using~\eqref{E:OptSNRSetFunc}, the objective function of problem~\eqref{P:DualFuncnFixUA} under the optimal power allocation can be expressed as a set function $\tilde{f}\left(\mathcal{A}_n\right):2^\mathcal{M}\mapsto\mathds{R}$ given by
\begin{align}
\tilde{f}\left(\mathcal{A}_n\right)&\triangleq \frac{B}{N}F_{\hat{k}_n}\left(\mathcal{A}_n\right)\log_2\left(1+\left[\tilde{\gamma}_{\hat{k}_n,n}\left(\mathcal{A}_n\right)\right]^+\right)\notag\\
&\quad-\frac{\left[\tilde{\gamma}_{\hat{k}_n}\left(\mathcal{A}_n\right)\right]^+}{G\left(\mathcal{A}_n\right)},\label{E:ObjFUAOptPASet}
\end{align}
and problem~\eqref{P:DualFuncnFixUA} can be expressed as the following combinatorial problem
\begin{align}
\max_{\mathcal{A}_n\subseteq\mathcal{M}}~\tilde{f}\left(\mathcal{A}_n\right).\label{P:DFnFUAOptPRRHSelComb}
\end{align}
For maximizing a monotone, non-negative submodular function under a cardinality constraint, it is well known that a greedy procedure with linear complexity can achieve an objective value at least $(1-1/e)$ of the optimal~\cite{nemhauser1978analy}. For unconstrained maximization of a non-monotone, non-negative submodular function, deterministic and randomized algorithms providing reasonable  worst-case guarantees $1/2$ respectively, are known~\cite{buchbinder2012tight}. However, unlike the optimal SNR $\tilde{\gamma}_{\hat{k}_n,n}\left(\mathcal{A}_n\right)$ in~\eqref{E:OptSNRSetFunc}, the objective function $\tilde{f}\left(\mathcal{A}_n\right)$ of problem~\eqref{P:DFnFUAOptPRRHSelComb} cannot be shown to be submodular; while neither $\tilde{\gamma}_{\hat{k}_n,n}\left(\mathcal{A}_n\right)$ in~\eqref{E:OptSNRSetFunc} or $\tilde{f}\left(\mathcal{A}_n\right)$ in~\eqref{E:ObjFUAOptPASet} are monotone or non-negative. 

Nevertheless, motivated by the submodularity of $\tilde{\gamma}_{\hat{k}_n,n}\left(\mathcal{A}_n\right)$ as shown by~\cref{Prop:OptSNRSubMod}, we propose a suboptimal greedy algorithm with quadratic complexity in the number of RRHs $M$ to solve problem~\eqref{P:DFnFUAOptPRRHSelComb} approximately. We construct a suboptimal set $\check{\mathcal{A}}_n$ for problem~\eqref{P:DFnFUAOptPRRHSelComb} using a greedy algorithm as follows. At the start of an iteration $i$, let $\mathcal{A}_{n,i}$ denote the set of selected RRHs on SC $n$ and $\tilde{f}_i$ denote the objective value of problem~\eqref{P:DFnFUAOptPRRHSelComb} given by~\eqref{E:ObjFUAOptPASet}. Initially, we assume that $\mathcal{A}_{n,1}=\emptyset$, and $\tilde{f}_1=0$. Then, at each iteration $i=1,\dotsc,M$, we first find an RRH $j_i\in\mathcal{M}\setminus\mathcal{A}_{n,i}$, which is not currently selected, and when added to the current set of RRHs $\mathcal{A}_{n,i}$, gives the maximum value of the objective in~\eqref{E:ObjFUAOptPASet} among all the currently un-selected RRHs, i.e., 
\begin{align}
j_i=\mathop{\arg\max}_{\ell\in\mathcal{M}\setminus\mathcal{A}_{n,i}}~\tilde{f}\left(\mathcal{A}_{n,i}\cup\{\ell\}\right).\label{E:MaxRRHji}
\end{align}
Then, if RRH $j_i$ improves the current maximum objective value $\tilde{f}_i$, we add it to the current set of selected RRHs $\mathcal{A}_{n,i}$. That is, if 
\begin{align}
\tilde{f}\left(\mathcal{A}_{n,i}\cup\{j_i\}\right)>\tilde{f}_i
\label{E:NewObjGreater}
\end{align}
holds, the set of selected RRHs is updated as $\mathcal{A}_{n,i+1}=\mathcal{A}_{n,i}\cup\{j_i\}$, and the current maximum objective value is updated as 
\begin{align}
\tilde{f}_{i+1}=\tilde{f}\left(\mathcal{A}_{n,i}\cup\{j_i\}\right).\label{E:UpdCObj}
\end{align}
This procedure is continued until no RRH $j_i$ can be found which satisfies~\eqref{E:NewObjGreater}, or there is no more remaining RRH to be searched, i.e., $i=M$. If the algorithm stops at iteration $i$, the final suboptimal RRH selection is given by $\check{\mathcal{A}}_n=\mathcal{A}_{n,i}$ and the corresponding power allocation can be obtained from~\eqref{E:OptPA}. An outline of the above algorithm is given in~\cref{A:GreedyRRHSel}.
\begin{table}[h]
\caption{Greedy algorithm for problem~\eqref{P:DualFuncnFixUA}}\label{A:GreedyRRHSel}
\begin{framed}
\begin{algorithmic}[1]
\State Initialization: Iteration $i=1$, set of selected RRHs $\mathcal{A}_{n,1}=\emptyset$, maximum objective value $\tilde{f}_1=0$
\For {each $i=1,\dotsc,M$}
\State Find the RRH $j_{i}\in\mathcal{M}\setminus\mathcal{A}_{n,i}$, according to~\eqref{E:MaxRRHji}
\If {$j_i$ satisfies condition~\eqref{E:NewObjGreater}}
\State Update selected set as $\mathcal{A}_{n,i+1}=\mathcal{A}_{n,i}\cup\left\{j_{i}\right\}$
\State Update maximum objective value $\tilde{f}_{i+1}$ according to~\cref{E:UpdCObj} 
\Else 
\State Stop and return RRH set $\check{\mathcal{A}}_n=\mathcal{A}_{n,i}$ and corresponding power allocation given by~\eqref{E:OptPA}
\EndIf
\EndFor
\end{algorithmic}
\end{framed}
\end{table}
From~\eqref{E:MaxRRHji}, it can be seen that each iteration $i$ of the greedy algorithm involves a search over the set $\mathcal{M}\setminus\mathcal{A}_{n,i}$, which has size $M-(i-1)$, where $i\leq M$. As there can be at most $M$ iterations, the greedy algorithm requires $\sum_{i=1}^{M}M-i+1=M(M+1)/2$ iterations in the worst case, which 
is of $O\left(M^2\right)$. Thus, using the greedy algorithm to solve problem~\eqref{P:DualFuncnFixUA} on each SC reduces the overall complexity of solving problem~\eqref{P:Main} to $O\left(NKM^2\right)$.

Notice that the greedy algorithm would recover the optimal solution to problem~\eqref{P:DualFuncnFixUA} in cases where the optimal RRH selection consists of at most two RRHs. However, in general, for given dual variables $\lambda$ and $\bm\mu$, it gives only a suboptimal solution to problem~\eqref{P:DualFuncnFixUA}. This implies that convergence to the optimal dual variables $\lambda^\star$ and $\bm\mu^\star$ cannot be guaranteed if the greedy algorithm of~\cref{A:GreedyRRHSel} is used in~\cref{AL:ORRHSelP} of the overall algorithm for problem~\eqref{P:Main} given in~\cref{A:Overall}. In this case, if the power allocation obtained is not feasible, it can be made feasible by scaling each of the power constraints in~\eqref{C:AvgPCMain}. Similarly, if the constraint in~\eqref{E:TSComb} is not satisfied, the RRHs can be de-selected in increasing order of their contribution to the rate on each SC, until~\eqref{E:TSComb} is satisfied. However, in~\cref{Sec:SimResults}, it is shown through extensive simulations that there is only a negligible difference between the performance of the greedy algorithm and the exhaustive search for $\bm\alpha_n$'s in practical scenarios. This can be understood as follows. The first term in the objective of sub-problem~\eqref{P:DualFuncnFixUA} is the product of a weighting factor and the achievable rate on the SC. As more RRHs are chosen to transmit on a given SC, the weighting factor decreases in linear steps, while the increase in the rate on each SC is only logarithmic. Intuitively, this implies that selecting the first few RRHs correctly is crucial in maximizing the objective of problem~\eqref{P:DualFuncnFixUA}, for which even a greedy search may be sufficient, thus implying the good performance of the greedy algorithm. Thus, in practice, the joint resource allocation problem~\eqref{P:Main} may be solved close to optimally at an even lower complexity of $O(NKM^2)$ compared to $O\left(NK2^M\right)$ for exhaustive search over the RRHs on each SC. 
\section{Simulation Results}\label{Sec:SimResults}
For the simulation setup, we consider a cluster of $M$ RRHs and $K$ users, both of which are distributed uniformly within a circular area of radius $500$~m, and whose center is situated at a distance of $2$~km from the CP. The mmWave wireless fronthaul of bandwidth $W$~Hz and centered at a frequency of $73$~GHz, is shared among the RRHs via TDMA. The channel from the CP to each RRH is LoS, with the path loss given by $69.7 + 24\log_{10}\left(D_m\right)$~dB~\cite{rappaport-etal2015wideband,maccartney-etal2016millimeter}. The CP is assumed to transmit at a fixed power of $46$~dBm~\cite{3gpp36931} with an antenna gain of $27$~dB~\cite{rappaport-etal2015wideband}. 

The wireless access channel is centered at a frequency of $2$~GHz and has a bandwidth $B=20$~MHz, following the 3GPP LTE-A standard~\cite{3gpp36211}, and is divided into $N=128$ SCs using OFDMA. The combined path loss and shadowing~(large-scale fading) is modeled as $38+30\log_{10}\left(d_{k,m}\right)+X$ in dB~\cite{3gpp36931}, where $d_{k,m}$ in meters is the distance between the RRH $m$ and user $k$, and $X$ in dB is the shadowing random variable, which follows a zero-mean Gaussian distribution with a standard deviation of $6$~dB. The multipath channel for the wireless access is modeled using an exponential power delay profile with $\lceil N/4\rceil$ taps, and the small-scale fading on each tap is assumed to follow the Rayleigh distribution. The maximum transmit power at each RRH is set as $\bar{P}_m=24$~dBm, $m\in\mathcal{M}$, and an antenna gain of $2$~dB is assumed, following the 3GPP LTE pico cell parameter specifications~\cite{3gpp36931}. The noise power spectral density is $-174$~dBm/Hz with a noise figure of $7$~dB at all the receivers. For simplicity, we consider maximization of the sum rate in problem~\eqref{P:Main}, i.e., the user rate weights $\omega_k=1,\enspace\forall k\in\mathcal{K}$, and the values are averaged over $5$ random network layouts and $20$ channel realizations for each layout. We compare the performance of the following benchmark schemes with the proposed solutions in~\cref{Sec:PropSol}. 
\begin{itemize}
\item \textbf{Benchmark scheme 1: Single RRH selection}. 
In this scheme, only one RRH is selected on each SC, and there is no coherent-combining gain due to joint transmission by more than one RRHs. 
This is achieved by solving problem~\eqref{P:Main} with the following additional constraints on the RRH selections, 
\begin{align}
\bm 1^\mathsf{T}\bm\alpha_n\leq 1\quad\forall n\in\mathcal{N}.
\end{align} 
The sub-problem~\eqref{P:DualFuncnFixUA} on each SC is then solved by searching for the best among $M$ RRHs that can maximize the objective in~\eqref{P:DualFuncnFixUA}, under the optimal power allocation given by~\eqref{E:OptPASingleRRH}. The optimal dual variables $\lambda^\star$ and $\bm\mu^\star$ for this new problem are found using the algorithm in~\cref{A:Overall}, as before. The overall complexity in this case is thus reduced to $O\left(NKM\right)$.
\item \textbf{Benchmark scheme 2: Equal power allocation}.
In this scheme the transmit power allocations at all the RRHs are fixed as $p_{m,n}=\bar{P}_m/N\enspace\forall m\in\mathcal{M},n\in\mathcal{N}$. With this power allocation, problem~\eqref{P:Main} is solved for the optimal RRH selections $\left\{\bm\alpha_n\right\}$ and the user associations $\left\{\bm\nu_n\right\}$. The sub-problem~\eqref{P:DualFuncnFixUA} is solved by exhaustively searching over the $2^M$ RRH selections. As the transmit power allocations are fixed, the optimal dual variable $\lambda^\star$ corresponding to the constraint in~\eqref{E:TSComb} can be found by a simple bisection search over the interval $0\leq\lambda\leq\max_{m\in\mathcal{M}}R_m$. However, the worst-case complexity of this scheme is still $O\left(NK2^M\right)$. 
\item \textbf{Benchmark scheme 3: Conventional OFDMA}. 
In this scheme, we consider a conventional OFDMA-based system, where each RRH $m$ is pre-assigned a fixed set of SCs denoted by $\mathcal{S}_m\subseteq\mathcal{N}$, where $\left|\mathcal{S}_m\right|=\lfloor N/M\rfloor$. 
It is assumed that each user is associated to its nearest RRH. Thus, each RRH $m\in\mathcal{M}$ transmits to the set of users $\mathcal{K}_m\subseteq\mathcal{K}$ associated to it over the set of SCs $\mathcal{S}_m$, and only one RRH transmits on each SC. As conventional OFDMA systems are typically assumed to have infinite fronthaul capacity, in order to illustrate the effect of a shared fronthaul, we assume that the CP transmits to each RRH for an equal amount of time over the TDMA-based mmWave fronthaul, i.e., $t_m=1/M,\enspace\forall m\in\mathcal{M}$. Then, the fronthaul time-sharing constraint in~\eqref{E:TSComb} is decoupled into individual constraints at each RRH, given by
\begin{align}
\frac{1}{R_m}\sum_{n\in\mathcal{S}_m}\sum_{k\in\mathcal{K}_m}\nu_{k,n}r_{k,n}\left(p_{m,n}\right)\leq\frac{1}{M},\quad\forall m\in\mathcal{M},
\end{align} 
where 
\begin{align}
r_{k,n}\left(p_{m,n}\right)=\frac{B}{N}\log_2\left(1+\frac{\left|h_{k,m,n}\right|^2p_{m,n}}{\sigma^2}\right).
\end{align} 
Note that $p_{m,n}=0\enspace\forall n\notin\mathcal{S}_m,\forall m\in\mathcal{M}$. The joint resource allocation problem~\eqref{P:Main} is then decoupled into $M$ parallel joint power allocation and user association problems, one at each RRH. The optimal solution to this problem for each RRH can be found using the algorithm in~\cref{A:Overall}. Since the RRH-SC selections are already fixed, the optimization is only over the power allocation and the user-SC association at each RRH. The optimal power allocation is given by~\eqref{E:OptPASingleRRH}, and the user on each SC can be found by searching over the $\left|\mathcal{K}_m\right|$ possible users associated with each RRH $m$. Thus, in this case, $M$ problems must be solved in parallel, each with a worst-case complexity given by $O\left(\lfloor \frac{N}{M}\rfloor\left|\mathcal{K}_m\right|\right)$. Thus, if $K_{\mathrm{max}}=\max_{m\in\mathcal{M}}\left|\mathcal{K}_m\right|$ denotes the maximum number of users assigned to an RRH, the effective worst-case complexity of this scheme is given by $O\left(NK_{\mathrm{max}}\right)$. 
\end{itemize}
Both benchmark schemes 1 and 2 require the knowledge of all the channel gains at the CP, where the optimization is performed, and thus, their overhead is same as that for the proposed schemes. On the other hand, for benchmark scheme 3, each RRH needs to know only the channels to its own associated users, and the optimization can be performed at each RRH. 
\begin{figure}[h]
\centering
\includegraphics[width=\linewidth]{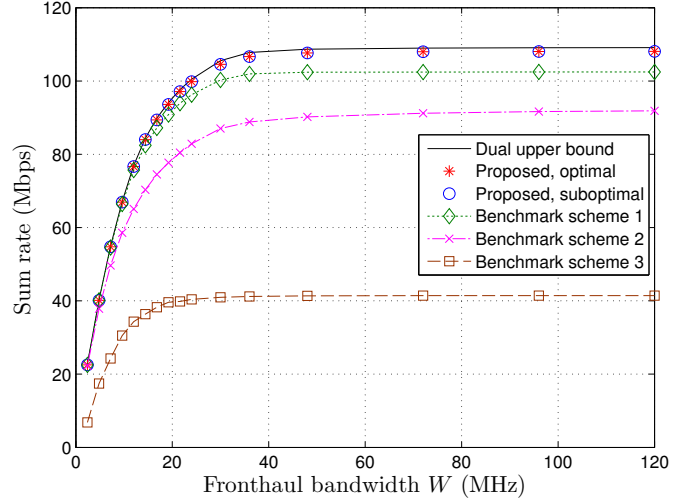}
\caption{Sum rate~(Mbps) vs.\ fronthaul bandwidth $W$~(MHz) for system with $M=6$, $K=8$, $B=20$~MHz and $N=128$.}\label{F:SRvFHW}
\end{figure}

\cref{F:SRvFHW} plots the sum rate achievable by the various schemes against the mmWave fronthaul bandwidth $W$. The dual upper bound given by the optimal value of the dual function $g\left(\lambda^\star,\bm\mu^\star\right)$ is also shown for comparison. 
From~\cref{F:SRvFHW}, it can be observed that the sum rate achieved by the proposed optimal solution is nearly equal to the dual upper bound, thus validating its optimality for large $N$. Moreover, the proposed suboptimal solution of lower complexity introduced in~\cref{SS:SubOptSubP} is observed to perform almost as well as the optimal solution that requires an exhaustive search to find the optimal RRH selection. This can be attributed to the submodularity of the SNR on each SC under the optimal power allocation, which implies that selecting the best few RRHs is most important in achieving the major part of the coherent-combining gain on each SC, and finding the optimal RRH selection may not always be necessary, as explained in~\cref{SS:SubOptSubP}. Thus, in practical deployments with a moderate number of RRHs, a close to optimal solution can be obtained 
at a much lower complexity of $O\left(NKM^2\right)$. 

When the fronthaul bandwidth $W$ is small, benchmark scheme~1 that selects only the best RRH on each SC combined with optimal power allocation, performs almost as well as the optimal solution. In this case, the system is primarily limited by the fronthaul, and only one or very few RRHs can be supported by the fronthaul on each SC. However, the proposed solution outperforms benchmark scheme~2 that optimally selects RRHs under equal power allocation on all SCs at all values of $W$, showing the importance of optimal power allocation. At higher values of $W$, almost all the RRHs are selected on all the SCs, providing a coherent-combining gain on each SC, which is not achieved by the single RRH selection. However, the single RRH selection can still provide a diversity gain, since the best RRH is selected on each SC. Also notice that the performance of benchmark scheme~1 remains constant for $W$ larger than $30$~MHz, which is expected since benchmark scheme~1 selects only one RRH on each SC, and increasing $W$ does not help in achieving a coherent-combining gain on any SC. On the other hand, even with equal power allocation, the performance of benchmark scheme~2 increases slightly with increasing $W$, since in this case, more and more RRHs can be selected on each SC. 

\begin{figure}[h]
\centering
\includegraphics[width=\linewidth]{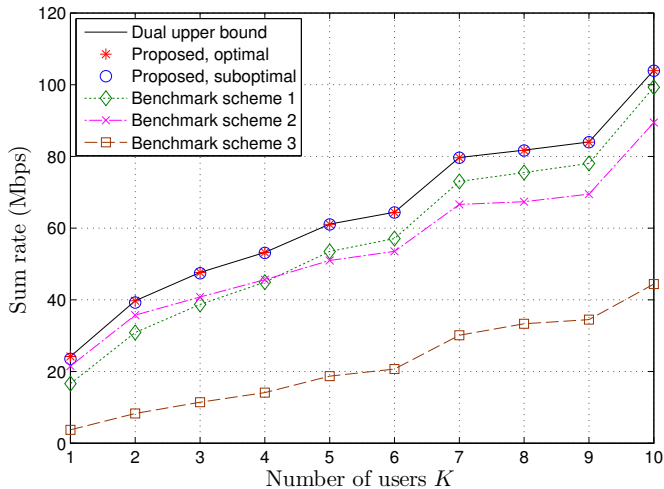}
\caption{Sum rate~(Mbps) vs.\ number of users $K$ for system with $W=50$~MHz, $M=5$, $B=20$~MHz and $N=128$.}\label{F:SRvK}
\end{figure}
\cref{F:SRvK} shows the variation of the sum rate with the number of users, $K$. Here, the fronthaul bandwidth $W=50$~MHz, and the number of RRHs $M=5$, are fixed. The performance comparison of the various schemes is observed to be consistent with that in~\cref{F:SRvFHW}, except for the case of low values of $K$, where benchmark scheme~2 with equal power allocation performs slightly better than benchmark scheme~1 that selects only the best RRH on each SC. This implies that when the number of users is small compared to the number of RRHs, i.e., the network is dense in the RRHs, achieving a coherent-combining gain even with equal power allocation can be better than selecting only the best RRH on each SC.

\begin{figure}[h]
\centering
\includegraphics[width=\linewidth]{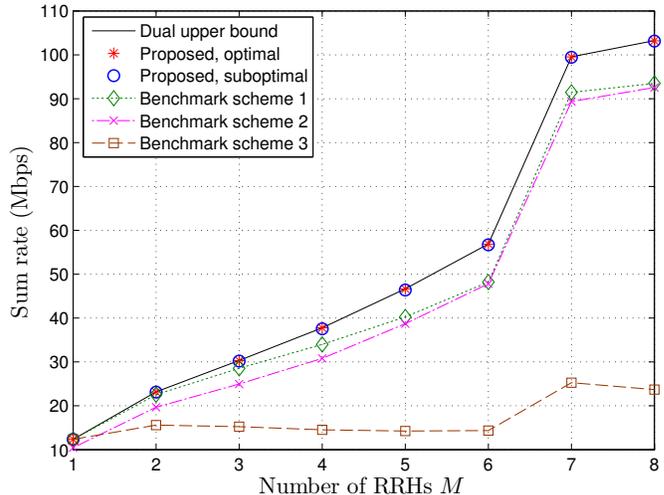}
\caption{Sum rate~(Mbps) vs.\ number of RRHs $M$ for system with $W=100$~MHz, $K=4$, $B=20$~MHz and $N=128$.}\label{F:SRvM}
\end{figure}
\cref{F:SRvM} shows the variation of the sum rate with the number of RRHs, $M$. Here, the fronthaul bandwidth $W=100$~MHz and the number of users $K=4$ are fixed. Again, the trends are similar to~\cref{F:SRvFHW,F:SRvK}. Observe that when $M=1$, both the proposed schemes and benchmark schemes~1 and 3 are equivalent. As $M$ is increased, the performance of the conventional OFDMA with benchmark scheme~3 remains more or less the same, since both an equal time allocation on the fronthaul and an equal division of SCs on the access are performed, which cannot exploit the diversity offered by larger $M$. Also, similar to~\cref{F:SRvK}, as $M$ becomes much larger than $K$, the performance of benchmark scheme~2 with equal power allocation approaches that of benchmark scheme~1 with single RRH selection.

In summary, \cref{F:SRvFHW,F:SRvK,F:SRvM} show the sum-rate gains achieved by the proposed solutions compared to other benchmark schemes. In particular, the sum rate achieved by the proposed solutions is more than double of that of the conventional OFDMA with equal time allocation on the fronthaul, showing the advantage of joint dynamic resource allocation over the mmWave fronthaul and wireless access achieved by the proposed centralized scheduling. 
\section{Conclusion}\label{Sec:Conc}
In this paper, we have studied the downlink transmission in a new OFDMA-based UD-CRAN enabled by the mmWave fronthaul. Specifically, we considered a system where the user assigned on any frequency SC can potentially be served by multiple RRHs, subject to the fronthaul rate constraint. We formulated a joint fronthaul time allocation, RRH-SC selection and power allocation, and user-SC association problem to maximize the WSR of users. Although the problem is combinatorial and non-convex in general, we proposed efficient solutions based on the Lagrange duality technique and greedy search. 

Through numerical simulations, we have shown that both the proposed solutions achieve the optimal throughput performance, and significantly outperform the other benchmark schemes considered, under a practical UD-CRAN setup with mmWave based wireless fronthaul. In particular, our proposed solutions for the OFDMA-based UD-CRAN can achieve throughput gains of more than 150\% over a conventional LTE-A network where each user is associated with a single RRH/BS and the mmWave fronthaul bandwidth is equally divided among the RRHs. Thus, the proposed OFDMA-based UD-CRAN with mmWave fronthaul is a cost-effective and scalable architecture for future 5G networks, and when combined with our proposed resource allocation algorithms, it can provide significant throughput gains, especially over the current LTE-A networks. 
\appendices
\section{Proof of~\Cref{L:rknConc}}\label{App:ProofrknConc}
Without loss of generality, assume $\bm\alpha_n=\bm 1$, and $p_{m,n}>0,~\forall m\in\mathcal{M}$, since otherwise we could equivalently define $r_{k,n}\left(\bm\alpha_n,\bm p_n\right)$ by excluding the RRHs for which $\alpha_{m,n}=0$ or $p_{m,n}=0$ from the summation in~\eqref{E:SCRate}. 
Consider the function 
\begin{align}
f(\bm p)=\left(\sum_{m=1}^M\sqrt{p_m}\right)^2,\quad\bm p\succ\bm 0.\label{E:Func}
\end{align}
The second order partial derivatives of $f(\bm p)$ are given by
\begin{align}
\md{f(\bm p)}{2}{p_i}{}{p_j}{}&=\frac{1}{2\sqrt{p_ip_j}}\quad i\neq j,~i,j\in\mathcal{M}\label{E:Partiali}\\
\pd[2]{f(\bm p)}{p_i}&=\frac{1}{2p_i}-\frac{1}{2p_i^{3/2}}\left(\sum_{m=1}^M\sqrt{p_m}\right)\quad i\in\mathcal{M}.\label{E:Partialij}
\end{align}
Using~\eqref{E:Partiali} and~\eqref{E:Partialij}, the Hessian matrix of $f(\bm p)$ can be written as 
\begin{align}
&\nabla^2f(\bm p)
=\frac{1}{2}\left[\bm u\bm u^\mathsf{T}
-\left(\sum_{m=1}^M\sqrt{p_m}\right)\mathsf{diag}\begin{pmatrix}
p_1^{-\frac{3}{2}}&\cdots&p_M^{-\frac{3}{2}}
\end{pmatrix}\right],
\end{align}
where $\bm u\triangleq\begin{bmatrix}1/\sqrt{p_1}&\cdots&1/\sqrt{p_M}\end{bmatrix}^\mathsf{T}$. Then, for any 
vector 
$\bm v\in\mathds{R}^{M\times 1}$, we have
\begin{align}
&\bm v^\mathsf{T}\nabla^2f(\bm p)\bm v\notag\\
&=\frac{1}{2}\left[\left(\bm u^\mathsf{T}\bm v\right)^2-\left(\sum_{m=1}^M\sqrt{p_m}\right)\bm v^\mathsf{T}\mathsf{diag}\begin{pmatrix}
p_1^{-\frac{3}{2}}&\cdots&p_M^{-\frac{3}{2}}\end{pmatrix}\bm v\right].\label{E:QuadFormHess}
\end{align}
Now define the vectors $\bm a\triangleq\begin{bmatrix}p_1^{1/4}&\cdots&p_M^{1/4}\end{bmatrix}^\mathsf{T}\in\mathds{R}_+^{M\times 1}$, and $\bm b\triangleq\begin{bmatrix}v_1/p_1^{3/4}&\cdots&v_M/p_M^{3/4}\end{bmatrix}^\mathsf{T}\in\mathds{R}^{M\times 1}$.  
Then, using~\cref{E:QuadFormHess} and the Cauchy-Schwarz inequality $\left(\bm a^\mathsf{T}\bm b\right)^2\leq\left\|\bm a\right\|^2\left\|\bm b\right\|^2$, it can be verified that $\bm v^\mathsf{T}\nabla^2f(\bm p)\bm v\leq 0,\enspace\forall \bm v\in\mathds{R}^{M\times 1}$, which implies that the matrix $\nabla^2f(\bm p)$ is negative semidefinite~\cite{boyd2004convex}. Hence $f(\bm p)$ is jointly concave in $\bm p$. Setting $\bm\alpha_n=\bm 1$ in~\eqref{E:SCSNR}, the SNR on SC $n\in\mathcal{N}_k$ can be written as  
\begin{align}
\gamma_{k,n}\left(\bm p_n\right)
&=\left(\sum_{m=1}^M\left(\frac{\left|h_{k,m,n}\right|^2}{\sigma^2}p_{m,n}\right)^{1/2}\right)^2\nonumber\\
&=f\left(\mathsf{diag}\left(\begin{array}{ccc}\frac{\left|h_{k,1,n}\right|^2}{\sigma^2}&\cdots&\frac{\left|h_{k,M,n}\right|^2}{\sigma^2}\end{array}\right)\bm p_n\right)\label{E:SCSNRTrans}
\end{align}
where $f(\bm p_n)$ is defined in~\eqref{E:Func}. Since $f(\bm p_n)$ is jointly concave in $\bm p_n$, and from~\eqref{E:SCSNRTrans}, $\gamma_{k,n}\left(\bm p_n\right)$ is the composition of $f\left(\bm p_n\right)$ with a linear transformation of $\bm p_n$, it follows that $\gamma_{k,n}\left(\bm p_n\right)$ is also jointly concave in $\bm p_n$. Now, the logarithm function is concave and its extended value extension on the real line is non-decreasing. Thus, $r_{k,n}\left(\bm p_n\right)=\frac{B}{N}\log_2\left(1+\gamma_{k,n}\left(\bm p_n\right)\right)$ is the composition of the concave function $\gamma_{k,n}\left(\bm p_n\right)$ with a concave and non-decreasing function, and hence, is also jointly concave in $\bm p_n$~\cite{boyd2004convex}. The proof of~\cref{L:rknConc} is thus completed. 
\section{Proof of~\Cref{Prop:OptPAFixURRHSel}}\label{A:ProofPAlloc}
Consider the following two cases based on whether the factor $F_{\hat{k}_n,n}\left(\tilde{\bm\alpha}_n\right)$ defined in~\eqref{E:FHParam} is either greater than zero, or less than or equal to zero, respectively.
\subsection{Case 1: $F_{\hat{k}_n,n}\left(\tilde{\bm\alpha}_n\right)\leq 0$}
In this case, since $r_{\hat{k}_n,n}\left(\tilde{\bm\alpha}_n,\bm p_n\right)$ is jointly concave in $\bm p_n$ according to~\Cref{L:rknConc}, it follows that the objective of problem~\eqref{P:DualFuncnFixUARRHSel} is jointly convex in $\bm p_n$. Thus, problem~\eqref{P:DualFuncnFixUARRHSel} is non-convex under this condition. However, since $r_{\hat{k}_n,n}\left(\tilde{\bm\alpha}_n,\bm p_n\right)\geq 0$, $\bm\mu\succeq\bm 0$, and $\bm p_n\succeq\bm 0$, it implies that for any non-zero power allocation $\bm p_n\neq\bm 0$, the objective of problem~\eqref{P:DualFuncnFixUARRHSel} is strictly negative, while the objective is zero for $\bm p_n=\bm 0$. Thus, the optimal power allocation for problem~\eqref{P:DualFuncnFixUARRHSel} in this case is to allot zero power on all RRHs, i.e. $\tilde{\bm p}_n=\bm 0$. 
\subsection{Case 2: $F_{\hat{k}_n,n}\left(\tilde{\bm\alpha}_n\right)>0$} 
In this case, the objective of problem~\eqref{P:DualFuncnFixUARRHSel} is jointly concave in $\bm p_n$, as $r_{\hat{k}_n,n}\left(\tilde{\bm\alpha}_n,\bm p_n\right)$ is jointly concave in $\bm p_n$ according to~\Cref{L:rknConc}. Thus problem~\eqref{P:DualFuncnFixUARRHSel} is convex, which implies that there exists a unique $\tilde{\bm p}_n\succeq\bm 0$ that attains the maximum of the objective in~\eqref{P:DualFuncnFixUARRHSel}. Taking the derivative of the objective in~\eqref{P:DualFuncnFixUARRHSel} with respect to the power allocation variable $p_{i,n}$, for each RRH $i\in\mathcal{M}$, and setting it equal to zero gives 
\begin{align}
&\frac{B}{N\ln 2}\left(\omega_{\hat{k}_n}-\lambda\sum_{m=1}^M\frac{\tilde{\alpha}_{m,n}}{R_m}\right)\notag\\
&
\cdot\left[\frac{\frac{1}{\sigma}\left(\sum_{m=1}^M\left|h_{\hat{k}_n,m,n}\right|\tilde{\alpha}_{m,n}\sqrt{p_{m,n}}\right)}{1+\frac{1}{\sigma^2}\left(\sum_{m=1}^M\left|h_{\hat{k}_n,m,n}\right|\tilde{\alpha}_{m,n}\sqrt{p_{m,n}}\right)^2}\right]\notag\\
&
\cdot\frac{\left|h_{\hat{k}_n,i,n}\right|\tilde{\alpha}_{i,n}}{\sigma}p_{i,n}^{-1/2}
-\mu_i
=0.\label{E:DiffSULagpi}
\end{align}
Note that~\eqref{E:DiffSULagpi} must be satisfied by each of the optimal power allocations $\tilde{p}_{i,n}$ on RRH $i\in\mathcal{M}$. The received SNR at the user $\hat{k}_n$ corresponding to this optimal power allocation $\tilde{\bm p}_n$ is given by 
\begin{align}
\tilde{\gamma}_{\hat{k}_n,n}&\triangleq\gamma_{\hat{k}_n,n}\left(\tilde{\bm\alpha}_n,\tilde{\bm p}_n\right)\notag\\
&=\frac{1}{\sigma^2}\left(\sum_{m=1}^M\left|h_{\hat{k}_n,m,n}\right|\tilde{\alpha}_{m,n}\sqrt{\tilde{p}_{m,n}}\right)^2.\label{E:OptSNRSU}
\end{align}
Then, using~\eqref{E:OptSNRSU} in~\eqref{E:DiffSULagpi}, re-arranging and squaring, the optimal power allocation can be written as 
\begin{align}
\tilde{p}_{i,n}&=\left[\frac{B}{N\ln 2}\left(\omega_{\hat{k}_n}-\lambda\sum_{m=1}^M\frac{\tilde{\alpha}_{m,n}}{R_m}\right)\right]^2
\frac{\tilde{\gamma}_{\hat{k}_n,n}}{\left(1+\tilde{\gamma}_{\hat{k}_n,n}\right)^2}\notag\\
&\quad\cdot\frac{\tilde{\alpha}_{i,n}\left|h_{\hat{k}_n,i,n}\right|^2}{\sigma^2\mu_i^2},\quad i\in\mathcal{M}.\label{E:OptPA1SU}
\end{align}
Substituting the value of $\tilde{p}_{i,n}$ from~\eqref{E:OptPA1SU} for each $i\in\mathcal{M}$ in~\eqref{E:OptSNRSU}, the optimal SNR at the user must satisfy the relation 
\begin{align}
\tilde{\gamma}_{\hat{k}_n,n}
&=\left[\frac{B}{N\ln 2}\left(\omega_{\hat{k}_n}-\lambda\sum_{m=1}^M\frac{\tilde{\alpha}_{m,n}}{R_m}\right)\right]^2
\frac{\tilde{\gamma}_{\hat{k}_n,n}}{\left(1+\tilde{\gamma}_{\hat{k}_n,n}\right)^2}\notag\\
&\quad\cdot\left(\sum_{m=1}^M\frac{\tilde{\alpha}_{m,n}\left|h_{\hat{k}_n,m,n}\right|^2}{\sigma^2\mu_m}\right)^2.\label{E:OptSNRSUCond}
\end{align}
Since all the quantities in~\eqref{E:OptSNRSUCond} are non-negative, taking square roots on both sides and re-arranging using the definitions~\eqref{E:FHParam} and~\eqref{E:AccParam} yields
\begin{align}
\tilde{\gamma}_{\hat{k}_n,n}^{1/2}
\left[1+\tilde{\gamma}_{\hat{k}_n,n}-\frac{B}{N\ln 2}F_{\hat{k}_n,n}\left(\tilde{\bm\alpha}_n\right)
G_{\hat{k}_n,n}\left(\tilde{\bm\alpha}_n\right)\right]=0.
\end{align}
The above implies that either $\tilde{\gamma}_{\hat{k}_n,n}=0$ or
\begin{align}
\tilde{\gamma}_{\hat{k}_n,n}=\frac{B}{N\ln 2}F_{\hat{k}_n,n}\left(\tilde{\bm\alpha}_n\right)G_{\hat{k}_n,n}\left(\tilde{\bm\alpha}_n\right)-1.\label{E:OptSNRSUFinal}
\end{align}
If $\tilde{\gamma}_{\hat{k}_n,n}=0$, then from~\eqref{E:OptPA1SU}, it follows that $\tilde{\bm p}_n=\bm 0$. Thus, for a non-zero power allocation, we require $\tilde{\gamma}_{\hat{k}_n,n}>0$ , which from~\eqref{E:OptSNRSUFinal}, translates to the condition,
\begin{align}
\frac{B}{N\ln 2}F_{\hat{k}_n,n}\left(\tilde{\bm\alpha}_n\right)G_{\hat{k}_n,n}\left(\tilde{\bm\alpha}_n\right)-1>0.\label{E:CondSNRSCSU}
\end{align}
In Case 1, $F_{\hat{k}_n,n}\left(\tilde{\bm\alpha}_n\right)\leq 0$, and hence condition~\eqref{E:CondSNRSCSU} cannot be satisfied, since $G_{\hat{k}_n,n}\left(\tilde{\bm\alpha}_n\right)\geq 0$. Since $\tilde{\bm p}_n=\bm 0$ in Case 1, condition~\eqref{E:CondSNRSCSU} subsumes Case 1. Finally, using~\eqref{E:OptSNRSUFinal} in~\eqref{E:OptPA1SU}, along with the definitions~\cref{E:FHParam,E:AccParam} and the condition~\eqref{E:CondSNRSCSU}, the optimal power allocation that solves problem~\eqref{P:DualFuncnFixUARRHSel} is obtained as given in~\eqref{E:OptPA}. The proof of~\cref{Prop:OptPAFixURRHSel} is thus completed. 
\section{Proof of~\cref{Prop:OptSNRSubMod}}\label{App:ProofOptSNRSubMod}
For convenience, we drop the user and SC subscripts $\hat{k}_n$ and $n$ in this proof. Notice that in order to show submodularity of $\tilde{\gamma}(\mathcal{A})$, it suffices to show that the product of the set functions $F(\mathcal{A})G(\mathcal{A})$ is submodular. Now, if RRH $i$ is added to the set $\mathcal{A}$, this product becomes 
\begin{align}
&F\left(\mathcal{A}\cup\{i\}\right)G\left(\mathcal{A}\cup\{i\}\right)\notag\\
&=\left(\omega-\sum_{m\in\mathcal{A}}\frac{\lambda}{R_m}-\frac{\lambda}{R_i}\right)\left(\sum_{m\in\mathcal{A}}\frac{\left|h_m\right|^2}{\sigma^2\mu_m}+\frac{\left|h_i\right|^2}{\sigma^2\mu_i}\right),
\end{align}
which after rearrangement gives
\begin{align}
F\left(\mathcal{A}\cup\{i\}\right)G\left(\mathcal{A}\cup\{i\}\right)&=F\left(\mathcal{A}\right)G\left(\mathcal{A}\right)+\frac{\left|h_i\right|^2}{\sigma^2\mu_i}F\left(\mathcal{A}\right)\notag\\
&\quad-\frac{\lambda}{R_i}G\left(\mathcal{A}\right)-\frac{\lambda\left|h_i\right|^2}{\sigma^2\mu_iR_i}.
\end{align}
Thus, the incremental difference in the value of the product $F\left(\mathcal{A}\right)G\left(\mathcal{A}\right)$ when RRH $i$ is added to $\mathcal{A}$ is given by
\begin{align}
&F\left(\mathcal{A}\cup\{i\}\right)G\left(\mathcal{A}\cup\{i\}\right)-F\left(\mathcal{A}\right)G\left(\mathcal{A}\right)\notag\\
&=\frac{\left|h_i\right|^2}{\sigma^2\mu_i}F\left(\mathcal{A}\right)-\frac{\lambda}{R_i}G\left(\mathcal{A}\right)-\frac{\lambda\left|h_i\right|^2}{\sigma^2\mu_iR_i}.\label{E:IncDiffAAi}
\end{align}
Similarly, the incremental difference when the same RRH $i$ is added to a larger set $\mathcal{A}\cup\{j\}$ can be obtained by replacing the set $\mathcal{A}$ with $\mathcal{A}\cup\{j\}$ and the set $\mathcal{A}\cup\{i\}$ with $\mathcal{A}\cup\{i,j\}$ in~\cref{E:IncDiffAAi}, which gives 
\begin{align}
&F\left(\mathcal{A}\cup\{i,j\}\right)G\left(\mathcal{A}\cup\{i,j\}\right)-F\left(\mathcal{A}\cup\{j\}\right)G\left(\mathcal{A}\cup\{j\}\right)\notag\\
&=\frac{\left|h_i\right|^2}{\sigma^2\mu_i}F\left(\mathcal{A}\cup\{j\}\right)-\frac{\lambda}{R_i}G\left(\mathcal{A}\cup\{j\}\right)-\frac{\lambda\left|h_i\right|^2}{\sigma^2\mu_iR_i}\label{E:IncDiffAijAj}\\
&=F\left(\mathcal{A}\cup\{i\}\right)G\left(\mathcal{A}\cup\{i\}\right)-F\left(\mathcal{A}\right)G\left(\mathcal{A}\right)\notag\\
&\quad-\left(\frac{\lambda\left|h_i\right|^2}{\sigma^2\mu_iR_j}+\frac{\lambda\left|h_j\right|^2}{\sigma^2\mu_jR_i}\right)\label{E:IncDiffRel}\\
&\leq F\left(\mathcal{A}\cup\{i\}\right)G\left(\mathcal{A}\cup\{i\}\right)-F\left(\mathcal{A}\right)G\left(\mathcal{A}\right),\label{E:IncDiffIneq}
\end{align}
where~\eqref{E:IncDiffRel} follows by rearranging~\eqref{E:IncDiffAijAj} and using~\eqref{E:IncDiffAAi}, and the last inequality~\eqref{E:IncDiffIneq} follows since the term in the parentheses in~\eqref{E:IncDiffRel} is non-negative. Thus, $F\left(\mathcal{A}\right)G\left(\mathcal{A}\right)$ satisfies condition~\eqref{E:SubMod}, and is hence submodular, which implies that $\tilde{\gamma}\left(\mathcal{A}\right)$ in~\eqref{E:OptSNRSetFunc} is submodular. This completes the proof of~\cref{Prop:OptSNRSubMod}. 
\bibliographystyle{IEEEtran_mod}
\bibliography{IEEEabrv,bibJournalList,ThesisBibliography}

\vfill

\enlargethispage{5in}

\end{document}